\documentclass{llncs}
\usepackage{amsmath,amssymb,amsmath}
\pdfoutput=1
\usepackage{ifpdf}
\usepackage{graphicx}
\usepackage{color}
\usepackage{fullpage}
\usepackage{textcomp}
\usepackage{setspace}
\usepackage{algorithmic}
\usepackage[ruled]{algorithm}
\newtheorem{observation}{Observation}

\setlength{\parskip}{1.3mm} \setlength{\parindent}{0pt}
\setlength{\parindent}{0pt}

\renewcommand{\P}{{\cal{P}}}
\newcommand{\M}{{\cal{M}}}

\newcommand{\C}{{\cal{C}}}

\newcommand{\R}{{\cal{R}}}
\newcommand{\N}{{\cal{N}}}
\renewcommand{\S}{{\cal{S}}}
\renewcommand{\O}{{\cal{O}}}
\newcommand{\T}{{\cal{T}}}
\newcommand{\D}{{\cal{D}}}

\frenchspacing

\title{Querying for the Largest Empty Geometric Object in a 
Desired Location\thanks{This version is a significant update of our earlier arXiv submission arXiv:1004.0558v1. Apart from new variants studied in Sections~\ref{sec:Convex} and~\ref{sec:simple-polygon-case}, the results have been improved in Section~\ref{sec:qmec}. Part of this work was done when the 
authors were visiting  IISc, Bangalore, and  TIFR, Mumbai.
}}
\author{John Augustine\inst{1} \and Sandip Das \inst{2} \and 
Anil Maheshwari \inst{3} \and Subhas C. Nandy \inst{2} \and 
Sasanka Roy\inst{4} \and  \\ Swami Sarvattomananda \inst{5}}

\institute{School of Physical and Mathematical Sciences, 
Nanyang Technological University, Singapore
\and  Advanced Computing and Microelectronics Unit, 
Indian Statistical Institute, Kolkata, India
\and  School of Computer Science, Carleton University, 
Ottawa, Canada
\and Chennai Mathematical Institute, Chennai, India
\and  School of Mathematical Sciences, Ramakrishna 
Mission Vivekananda University, Belur, India}
%\email{j.e.augustine@gmail.com, sandipdas@isical.ac.in, anil@scs.carleton.ca, subhas.c.nandy@gmail.com, sasanka.ro@gmail.com,
%sarvottamananda@gmail.com}
\begin{document}

\maketitle

\begin{abstract}
We study new types of geometric query problems defined as follows: 
given a  geometric set $P$, preprocess it such that given a query
point $q$, the location of the largest circle that does not contain
any member of $P$, but contains $q$ can be reported efficiently. The
geometric sets we consider for $P$ are boundaries of convex and simple
polygons, and point sets. While we primarily focus on circles as the
desired shape, we also briefly discuss empty rectangles in the context
of point sets.
\end{abstract}

\vspace{-0.1in}
\section{Introduction}

Largest empty space recognition is a classical problem in computational 
geometry, and has applications in several disciplines like data-mining,
database management, VLSI design, to name a few. Here the problem is 
to identify an empty space of desired shape and maximum size in a given 
region containing a given set of obstacles. Given a set $P$ of points 
in $\mathbb{R}^2$, an {\it empty circle}, is a circle that does not 
contain any member of $P$. An empty circle is said to be a {\it maximal 
empty circle} (MEC) if it is not fully contained in any other empty circle. 
Among the MECs', the one having maximum radius is the {\it largest empty
circle}. The largest empty circle among a point set $P$ can easily be 
located by using the Voronoi diagram of $P$ in $O(n\log n)$ time \cite{T}. 
The {\it maximal empty axis-parallel rectangle} (MER) can be defined in 
a similar manner. The literature on recognizing the largest empty 
axis-parallel rectangle among obstacles has spanned over three decades 
in computational geometry. The pioneering work on this topic is by 
Namaad et al. \cite{NHL} where it is shown that the number of MERs' 
($m$) among a set of $n$ points may be $O(n^2)$ in the worst case. In 
the same paper, an algorithm for identifying the largest MER was 
proposed. The worst case time complexity of that algorithm is 
$O(min(n^2, m\log n))$. The best known result on this problem runs in 
$O(n\log^2n)$ time in the worst case. The same time complexity result 
holds for the recognition of the largest MER among a set of arbitrary 
polygonal obstacles \cite{NSB}. However, the largest MER inside an 
$n$-sided simple polygon can be identified in $O(n\log n)$ time 
\cite{BU}. The worst case time complexity for recognizing the largest 
empty rectangle of arbitrary orientation among a set of $n$ points is 
$O(n^3)$ \cite{CND}.

Although a lot of study has been made on the empty space recognition 
problem, surprisingly, the query version of the problem  has not 
received much attention to the best of our knowledge. The problem 
of finding the largest empty circle centered on a given query line 
segment has been considered in \cite{APS10}. The preprocessing time,
space and query time of the proposed algorithm are $O(n^3\log n)$,
$O(n^3)$ and $O(\log n)$, respectively. In practical applications, one
may need to locate the largest empty space of a given shape in a
desired location. For example, in the VLSI physical design, one may
need to place a large circuit component in the vicinity of some
already placed components. Such problems arise in mining large data
sets as well, where the objective is to quickly study the
characteristics (such as the area of the empty space) near 
a query point.
In this paper, we will study the query versions of the empty space 
recognition problem. If the desired object is a circle, the problem 
is referred to as {\it maximal empty circle query} (QMEC) problem, 
and if the desired object is an axis-parallel rectangle, the  problem 
is referred to as {\it maximal empty rectangle query} (QMER) problem. 
The following variations are considered. 
\begin{description}
\item[] Given a convex polygon $P$, preprocess 
it such that given a query point $q$, the largest circle inside 
$P$ that contains the query point $q$ can be identified efficiently.
\item[] Given a simple polygon $P$, preprocess it 
such that given a query point $q$, the largest circle inside $P$ 
that contains the query point $q$ can be identified efficiently.
\item[] Given a set of points $P$, preprocess it such 
that given a query point $q$, the largest circle that does not contain 
any member of $P$, but contains the query point $q$ can be identified 
efficiently.
\item[] Given a set of points $P$, preprocess it such that given 
a query point $q$, the largest rectangle that does not contain any 
member of $P$, but contains the query point $q$ can be identified 
efficiently.
\end{description}

We believe that our work motivates study of new types of geometric 
query problems and may lead to a very active research area. The main 
theme of our work is to mainly understand which problems can be 
solved in subquadratic preprocessing time and space, while ensuring 
polylogarithmic query times. Our results are summarized in Table 
\ref{TAB1}.

\begin{table} \caption{Complexity results of different variations 
of largest empty space query problem}
\begin{center}
\begin{tabular}{|l|c|c|c|c|c|} \hline 
Geometric set & Shape of & Preprocessing & Space & Query
time & Sections\\ 
& empty space   & time  &  & &\\ \hline 
Convex Polygon & circle & $O(n)$ & $O(n)$ & $O(\log n)$& \ref{sec:Convex} \\ \hline 
Simple Polygon & circle & $O(n \log^3 n)$ & $O(n \log^2 n)$ & 
$O(\log^2 n)$& \ref{sec:simple-polygon-case} \\ \hline 
Point Set & circle & $O(n^2 \log n)$ & $O(n^2)$ & $O(\log n)$ 
& \ref{sec:qmec} \\ \hline
Point Set & rectangle & $O(n^2 \log n)$ & $O(n^2 \log n)$ 
& $O(\log n)$ & \ref{sec:qmer} \\ \hline
\end{tabular}
\end{center}
\label{TAB1}
\vspace{-0.15in}
\end{table}

In the course of studying these problems, we developed two 
different ways of implementing a key data structures for storing 
$n$ circles of arbitrary sizes such that when a query point 
$q$ is given, it can report the largest of the circles that 
contains $q$. This data structure may be of independent interest 
since it may aid in several other geometric search problems.  

\vspace{-0.1in}

\section{Preliminaries: LCQ-problem} \label{sec2}
In this section, we want to build a data structure called the 
{\it largest circle query data structure} ({\tt LCQ}, in 
short) for the point location in an arrangement of circles. 
Our input is a set $\C = \{C_1, C_2, \ldots, C_n\}$ of circles 
in non-increasing order of their radii. In the preprocessing 
phase we will construct the data structure. When a query point 
$q$ is given, it must report the largest circle in $\C$ that 
contains $q$ or a null value if $q$ is not enclosed by any 
circle in $\C$. For simplicity, we assume that at most two 
circles intersect at any point on the plane.

We provide two ways of building the {\tt LCQ} data structure. 
The first method uses divide-and-conquer, leading to a solution 
that is optimized for preprocessing time. The second method 
uses a line sweeping technique similar to~\cite{SarTar86}, and 
it gives a solution with better query time. The complexity 
results are given in Table \ref{tab:LCQ} 

\begin{table}[htbp]
\caption{Table comparing the two solutions for {\tt LCQ}}
\centering
\begin{tabular}{|l|c|c|c|c|}\hline
Techniques & Preprocessing time & Space & Query time \\ 
\hline \hline
Divide-and-conquer & $O(n \log^2 n)$ & $O(n \log n)$ & 
$O(\log^2 n)$ \\ \hline
Line sweep & $O(n^2 \log n)$ & $O(n^2)$ & $O(\log n)$ 
\\ \hline
\end{tabular}
\label{tab:LCQ}
\vspace{-0.1in}
\end{table}

\vspace{-0.1in}
\subsection{A divide-and-conquer solution}

\vspace{-0.1in}
\noindent {\bf Preprocessing:}
%\subsubsection{Preprocessing}
We form a tree $\D$ of depth $O(\log n)$ as follows. Its root 
$r$ represents all the members in $\cal C$, and is attached with 
a data structure ${\tt vor}(r)$ with the circles in ${\cal C}(r) 
= {\cal C}$. The two children of root, say $\D_\ell$ and $\D_r$, 
represent the sets ${\cal C}(r_\ell) = \{C_1, C_2, \ldots, 
C_{\lfloor\frac{n}{2}\rfloor}\}$ and ${\cal C}(r_r) = 
\{C_{\lfloor\frac{n}{2}\rfloor+1}, \ldots, C_{n-1}, C_n\}$, 
respectively. These define the associated structures 
${\tt vor}(r_\ell)$ and ${\tt vor}(r_r)$ of $\D_\ell$ and $\D_r$  
respectively. The subtrees of $\D_\ell$ and $\D_r$ are defined 
recursively in the similar manner. Finally, the leaves of $\D$ 
contain $C_1, C_2, \ldots, C_n$, respectively. The tree is 
computed in a bottom-up fashion starting from the leaves. The 
task of the data structure ${\tt vor}(v)$ associated to a node 
$v$ is to efficiently report whether or not the query point $q$ 
lies inside the union of the circles in ${\cal C}(v)$ it represents. 
We will use {\it Voronoi diagram in Laguerre geometry} of the 
circles in ${\cal C}(v)$ \cite{IIM}. Each cell of this Voronoi 
diagram is a convex polygon and is associated with a circle in 
${\cal C}(v)$. The membership query is answered by performing a 
point location in the associated planar subdivision. For a node $v$, 
${\tt vor}(v)$ can be computed in $O(|C(v)|\log|C(v)|)$ time and 
membership query can be answered in $O(\log|C(v)|)$ time 
\cite{IIM}.

\noindent {\bf Query answering:}
To find the largest circle in $C_q$ containing the given query point 
$q$, we start searching from the root $r$ of $\D$. If $q$ does 
not lie in the union of circles ${\cal C}(r) = \cal C$, then $q$ 
is contained in an empty circle of size infinity. We need not
proceed further in the tree. However, if the search succeeds, we 
need to continue the search among its children. A successful search 
at a node $v$ indicates that $q$ must lie either in the union of 
circles of its left child or the right child or both. We first 
consider its left child $v_\ell$, that contains the larger 
$\frac{|{\cal C}(v)|}{2}$ circles of node $v$. We search in the 
associated structure ${\tt vor}(v_\ell)$. If the search succeeds 
(i.e., $q \in \cup_{C \in {\cal C}(v_{\ell})} C$), the search 
proceeds in the subtree rooted at $v_\ell$. However, if the search 
fails, surely $q$ lies in the union of circles ${\cal C}(v_r)$, 
and the search proceeds in the subtree rooted at $v_r$. Proceeding 
similarly, one can identify the largest circle $C_q$ containing the 
query point $q$. 

\begin{theorem} \label{lem:largest-circle}
A set $\cal C$ of $n$ circles can be preprocessed in $O(n\log^2 n)$ 
time and $O(n\log n)$ space so that {\tt LCQ} queries can be  
answered in $O(\log^2 n)$ time.
\end{theorem}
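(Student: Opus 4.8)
The plan is to analyze the divide-and-conquer data structure $\D$ described above along three axes: preprocessing time, space, and query time, treating each as a recurrence over the $O(\log n)$ levels of the tree. First I would establish that $\D$ has depth $\lceil \log_2 n\rceil$, since each node's circle set is split roughly in half and the leaves are singletons; consequently, level $i$ contains $O(2^i)$ nodes whose associated circle sets $\{\,|{\cal C}(v)|\,\}$ sum to $n$, because the sets at a fixed level partition $\C$ (each circle appears in exactly one node per level). This ``total size $n$ per level'' observation is the workhorse for all three bounds.

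For preprocessing, at each node $v$ we build ${\tt vor}(v)$ — the Laguerre (power) Voronoi diagram of ${\cal C}(v)$ together with a point-location structure on it — which by the cited result of \cite{IIM} costs $O(|{\cal C}(v)|\log|{\cal C}(v)|)$ time. Summing over one level gives $\sum_v O(|{\cal C}(v)|\log|{\cal C}(v)|) = O(n\log n)$ (using concavity of $x\log x$, or just $\log|{\cal C}(v)|\le\log n$), and summing over the $O(\log n)$ levels yields $O(n\log^2 n)$. I should note the tree is built bottom-up as stated, but since the ${\tt vor}(v)$ structures are computed from the circle sets directly (not by merging children), the bottom-up order is merely organizational and does not change the bound. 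The same per-level argument gives space $\sum_{\text{levels}} O(n) = O(n\log n)$, since each planar subdivision and its point-location structure on $|{\cal C}(v)|$ circles occupies $O(|{\cal C}(v)|)$ space.

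For the query time, I would argue that the search visits exactly one root-to-leaf path. The key invariant to maintain is: whenever the search reaches a node $v$, the query point $q$ lies in $\bigcup_{C\in{\cal C}(v)}C$ (this holds at the root after the initial test, and if it fails we correctly report an infinite empty circle). At node $v$ we perform one point location in ${\tt vor}(v_\ell)$, costing $O(\log|{\cal C}(v_\ell)|)=O(\log n)$; if it succeeds we descend to $v_\ell$ with the invariant intact, and if it fails then $q\notin\bigcup_{C\in{\cal C}(v_\ell)}C$ but $q\in\bigcup_{C\in{\cal C}(v)}C={\bigl(\bigcup_{{\cal C}(v_\ell)}C\bigr)}\cup{\bigl(\bigcup_{{\cal C}(v_r)}C\bigr)}$ forces $q\in\bigcup_{C\in{\cal C}(v_r)}C$, so we descend to $v_r$ with the invariant intact and, crucially, without having to test ${\tt vor}(v_r)$. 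Hence at each of the $O(\log n)$ levels we spend $O(\log n)$, for a total of $O(\log^2 n)$. Because the left child always holds the larger-radius half of ${\cal C}(v)$, the leaf reached stores the largest circle in $\C$ containing $q$ — I would justify this by observing that whenever both children's unions contain $q$ we deliberately recurse into the side with the larger circles, and the radii are globally sorted, so no larger qualifying circle can be sitting in a subtree we pruned.

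The main obstacle is the correctness of the pruning step, i.e.\ verifying that descending into $v_\ell$ whenever $q\in\bigcup_{C\in{\cal C}(v_\ell)}C$ never skips the true answer. This needs the non-increasing radius order together with the contiguous-index splitting: ${\cal C}(v_\ell)$ consists of the $\lfloor|{\cal C}(v)|/2\rfloor$ largest-radius circles in ${\cal C}(v)$, so if $q$ is covered there, the maximum-radius circle of ${\cal C}(v)$ containing $q$ already lies in ${\cal C}(v_\ell)$; a short induction down the tree then shows the leaf we stop at is exactly $C_q$. The remaining details — the depth bound, the per-level summation, and the $O(\log|{\cal C}(v)|)$ point-location cost — are routine given \cite{IIM}.
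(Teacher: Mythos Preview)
Your proposal is correct and follows exactly the paper's approach: you analyze the same divide-and-conquer tree $\D$ with Laguerre Voronoi diagrams at each node, and your per-level summation and root-to-leaf query invariant are precisely the (implicit) argument the paper intends. One tiny slip: $x\log x$ is convex, not concave, so that clause doesn't help---but your fallback bound $\log|{\cal C}(v)|\le\log n$ is the right one and suffices.
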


\vspace{-0.1in}
\subsection{A line sweep solution}
\vspace{-0.1in}
We assume a pair of orthogonal lines on the plane to represent the 
coordinate system. The circles are given as a set of tuples; each 
tuple representing a circle consists of the coordinates of its 
center and the radius of that circle. The ordered set of vertices 
$V=(v_1, v_2, \ldots, v_{|V|})$ of the arrangement ${\cal A}({\C})$ 
consists of the (i) leftmost and rightmost point of each circle 
in $\C$, and (ii) points in which a pair of circles intersect. We 
assume, further, that the vertices of ${\cal A}({\C})$ have unique 
$x$ coordinates, thereby allowing the elements of $V$ to be stored 
in increasing order of their $x$ coordinates. A maximal segment of 
any circle in $\C$ that does not contain a vertex is called an
{\it edge} of ${\cal A}({\C})$. We use $E$ to denote the set of all
edges of ${\cal A}({\C})$. Since we include the left and right
extremeties of a circle in the set of vertices, the edges are always
$x$-monotone. Each edge is attached with two fields $ID_1$ and $ID_2$
indicating the largest circle containing the cell above and below it
respectively. Each cell is bounded by the edges of ${\cal A}({\C})$,
and is attached with an index $ID$ indicating the largest circle
containing that cell. We compute the arrangement ${\cal A}({\C})$ as
follows:
\vspace{-0.1in}
\begin{description}
\item[Step-1] Cut each circles into pseudo-segments such that a 
pair of pseudo-segments intersect in at most one point. If any of 
these segment contains the leftmost/rightmost point of the 
corresponding circle, it is again split at that point.
\item[Step-2] Sweep a vertical line from left to right to compute 
the cells of the arrangement ${\cal A}({\C})$. We also compute the 
$ID$ field of each cell during the sweep.
\end{description}

{\bf Step-1}

Consider a circle $C_i \in \C$. Each circle $C_j \in \C$, $j\neq i$, 
creates an arc $\alpha_j^i$ along the boundary of $C_i$ that indicates 
the portion  of the boundary of $C_i$ that is inside $C_j$. In order 
to split $C_i$ into pseudo segments, we need to compute the minimum 
number of rays from the center of $C_i$ that are required to pierce 
all the arcs $\alpha_j^i$, $j=1,2, \ldots, n$, $j\neq i$. This can be 
computed using the $O(n)$ time algorithm for computing the minimum 
geometric clique cover of the circular arc graph provided the 
end-points of the circular arcs are sorted \cite{HT91}. But, we need 
to sort the end-points of the circular arcs along the boundary of 
$C_i$. Thus, the splitting of all the circles in $\C$ into pseudo 
segments need $O(n^2\log n)$ time. Tamaki and Tokuyama \cite{TT} 
showed that the number of pseudo segments may be $O(n^{\frac{5}{3}})$ 
in the worst case. Recently Aronov and Sharir \cite{AS1} 
showed that number of pseudo-segments generated from $n$ unequal 
circles is at most $n^{\frac{3}{2}+\epsilon}$, where $\epsilon$ 
can be made arbitrarily small. \\

{\bf Step-2}

In this step, we sweep a vertical line from left to right exactly as 
in \cite{JP} to compute the arrangement ${\cal A}(\C)$. During the 
sweep, four types of events may occur: (i) leftmost point of a circle 
(ii) rightmost point of a circle, (iii) an end-point of a pseudo-segment 
that is not of type (i) or type (ii), and (iv) intersection point of 
two pseudo-segments. The events of type (i), (ii) and (iii) are initially 
inserted in a heap $\cal H$. The events of type (iv) are inserted in 
$\cal H$ when these are observed during the sweep. The sweep line status  
data structure ${\cal L}$ stores the edges intersected by the sweep line 
at the current instant of time. Each pair of consecutive edges indicate a cell 
intersected by the sweep line. Each cell $\eta$ intersected by the sweep 
line (indicated by a pair of consecutive edges in the sweep line status) 
is attached with a balanced binary search tree $\tau_\eta$ containing
the radii of the circles 
overlapping on that cell. Each time an event having minimum $x$-coordinate 
is chosen from $\cal H$ for the processing. The actions taken for each 
type of event is listed below.
\begin{description}
\item[$\bullet$] While processing a type (i) event corresponding 
to a circle $C$, a new cell $\eta$ and two new edges, say $e_1$ 
and $e_2$, of ${\cal A}({\C})$ take birth. These two new consecutive 
edges are inserted in ${\cal L}$. If the new cell $\eta$ arrives 
inside an existing cell $\eta'$ in the sweep line status $\cal L$, 
then $\tau_{\eta}$, attached to the cell $\eta$, is created 
by copying $\tau_{\eta'}$ and inserting the radius of the circle $C$ 
in it. The $ID$ field attached with the cell $\eta$ is the largest 
element in $\tau_\eta$.
\item[$\bullet$] Type (ii) events are also handeled in a similar 
fashion. Here two edges are deleted from the sweep line 
status $\cal L$. Thus, a cell will also disappear from $\cal L$.
\item[$\bullet$] At a type (iii) event one edge leaves the sweep-line 
and a new edge appears on the sweep line. Here, excepting this change 
on the sweep line, no other action is needed.
\item[$\bullet$] While processing a type (iv) event, an old cell $\eta'$ 
disappears from the sweep line and a new cell $\eta$ takes birth. If the event 
is generated due to the intersection of edges $e_1$ and $e_2$ corresponding 
to the circles $C_1$ and $C_2$, then $\tau_\eta$ is obtained by doing
an $O(\log n)$ time updating of $\tau_{\eta'}$. If $\eta$ is inside
(resp. outside) of the circle $C_i$, then  the radius of $C_i$ is
inserted in (resp. deleted from) $\tau_{\eta'}$ to get $\tau_\eta$.
Finally, the largest element of $\tau_\eta$ is attached as the $ID$
field of the cell $\eta$.
\end{description}

Since the number of type (i) events is $O(n)$, and each type (i) event needs $O(n)$ time (for copying a heap for the new cell), the time needed for processing all type (i) events is $O(n^2)$ in the worst case. The number of type (ii) and type (iii) events are $O(n)$ and 
$O(n^{\frac{5}{3}})$ respectively. As mentioned above, processing each type (ii)/type (ii) event needs $O(1)$ time. The number of type (iv) events is $O(n^2)$ in the worst case, and their processing needs $O(n^2 \log n)$ time. The point location in the
arrangement ${\cal A}(\C)$ of pseudo-segments is similar to that in
the arrangement of line-segments. Using trapezoidal decomposition of
cells, one can perform the query in $O(\log n)$ time \cite{PS}. Thus,
we have the following result:

\begin{theorem}
Given a set of circles of arbitrary radii, the preprocessing time and
space complexity of the {\tt LCQ} data structure are $O(n^2 \log n)$
and $O(n^2)$ respectively, and given an arbitrary query point, the
largest circle containing it can be reported in $O(\log n)$ time.
\end{theorem}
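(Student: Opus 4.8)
The plan is to verify, in order, (i) that the label $ID$ attached to each cell of ${\cal A}(\C)$ is correct, (ii) that ${\cal A}(\C)$ and its point-location structure have size $O(n^2)$ and answer queries in $O(\log n)$ time, and (iii) that Step~1 together with the sweep of Step~2 runs in $O(n^2\log n)$ time and $O(n^2)$ space. For (i) I would prove the invariant maintained throughout the sweep: for every cell $\eta$ met by the current sweep line, the balanced tree $\tau_\eta$ stores exactly the radii of the circles of $\C$ containing $\eta$, so its maximum key is the correct $ID$. This holds when a cell is born at a type~(i) event (the new face is the part of $C$'s interior just to the right of $C$'s leftmost point, whose set of containing circles is that of the surrounding face together with $C$, which is exactly what the copy-and-insert rule records) and it is preserved at a type~(iv) event (crossing one circle's boundary toggles membership in that single circle, matching the insert/delete rule); type~(ii) only removes a face and type~(iii) only renames an edge without changing the incident faces. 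Since ${\cal A}(\C)$ is exactly the arrangement of the $n$ input circles, the cell returned by point location on $q$ carries the radius of the largest circle containing $q$, or has an empty $\tau$ (or is the outer face) when no circle contains $q$, giving the null answer.

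For (ii) I would note that ${\cal A}(\C)$ is the arrangement of $n$ circles and hence has $O(n^2)$ vertices, edges and faces; the extra cuts of Step~1 add only the $2n$ extremal points and $O(n^{5/3})$ (or $n^{3/2+\epsilon}$) pseudo-segment endpoints, all of degree two, so the bound is unaffected. Because the resulting curves are $x$-monotone and any two cross at most once, their trapezoidal (vertical) decomposition has $O(n^2)$ cells, which yields an $O(n^2)$-space point-location structure with $O(\log n)$ query time by \cite{PS}.

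For (iii), Step~1 costs $O(n^2\log n)$ as argued in the text (sorting the arc endpoints around each $C_i$ dominates). In Step~2 there are $O(n)$ type~(i) events, each $O(n)$ for copying a tree; $O(n)$ type~(ii) and $O(n^{5/3})$ type~(iii) events at $O(1)$ each; and $O(n^2)$ type~(iv) events (at most two intersection points per pair of circles, and cutting into pseudo-segments creates none of them), each an $O(\log n)$ tree update; adding $O(\log n)$ per event for heap maintenance and $O(n^2\log n)$ for building the trapezoidal map gives $O(n^2\log n)$ overall. For space, the final structure stores only ${\cal A}(\C)$, a constant number of integer labels per cell and per edge, and the point-location structure, all $O(n^2)$; during the sweep the live trees total $O(n^2)$, since at most $O(n)$ cells meet the sweep line simultaneously and each carries a tree of size $O(n)$, and the tree of a cell that dies at a type~(ii) or type~(iv) event can be discarded — or reused destructively — as soon as its $ID$ has been stored.

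I expect the only genuinely delicate step to be the $O(n^2)$ space bound: a careless implementation, for instance one that keeps a persistent copy of $\tau_\eta$ at every type~(iv) event, would blow up to $\Theta(n^2\log n)$ or worse, so the argument must exploit that a tree is never needed once the cell's $ID$ is fixed, that true copying occurs only at the $O(n)$ type~(i) events, and that only $O(n)$ trees are ever simultaneously alive. A smaller point worth making explicit is why the pseudo-segment cutting of Step~1 is needed at all: even after splitting each circle at its leftmost and rightmost points, two circles may still yield two $x$-monotone arcs that cross twice, so the additional cuts are precisely what lets the segment-style sweep of \cite{JP} and the trapezoidal point-location of \cite{PS} be applied verbatim.
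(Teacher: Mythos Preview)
Your proposal is correct and follows essentially the same approach as the paper: the paper's argument for this theorem is simply the construction of Steps~1--2 together with the short complexity paragraph preceding the theorem, and your three parts (i)--(iii) recapitulate and justify exactly that. You are more careful than the paper on two points it leaves implicit --- the correctness invariant for $\tau_\eta$ and the $O(n^2)$ space bound (in particular, that trees are destructively reused at type~(iv) events and only $O(n)$ trees of size $O(n)$ are simultaneously alive) --- but these are elaborations of the paper's own construction rather than a different route.
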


\section{QMEC problem for convex polygon}
\label{sec:Convex}
Let $P$ be a convex polygon and $\{p_1, p_2, \ldots, p_n\}$ be its
vertices in anticlockwise order. The objective is to preprocess
$P$ such that given an arbitrary query point $q$, the largest
circle $C_q$ that contains $q$ but not intersected by the boundary
of $P$ can be 
reported efficiently. Needless to say, if $q$ lies outside or on the 
boundary of $P$, $C_q$ is a circle of infinite radius passing through 
$q$. So, the interesting problem is the case where $q$ lies inside 
$P$. Needless to mention that here $C_q$ is an MEC inside $P$. The
medial axis $M$ of $P$ is the locus of the centers of all the MECs' 
inside $P$. Let $c$ be the center of the largest MEC inside 
$P$\footnote{There can be infinitely many such MECs of equal radius 
inside $P$ in a degenerate case. But for the sake of simplicity we 
will assume that the largest MEC inside $P$ is unique} (see Figure 
\ref{fig:figure1}(a)). The medial axis consists of straight line
segments and can be viewed as a tree rooted at $c$ \cite{CSW99}. To
avoid the confusion with the vertices of the polygon, we call the
vertices of $M$ as nodes. Note that, the leaf-nodes of $M$ are the
vertices of $P$. Let us denote an MEC of $P$ centered at a point $x
\in M$ as $MEC_x$ and let $A_x$ be the area of $MEC_x$. 

\begin{figure}[thb]
\centering  \includegraphics[width=5in]{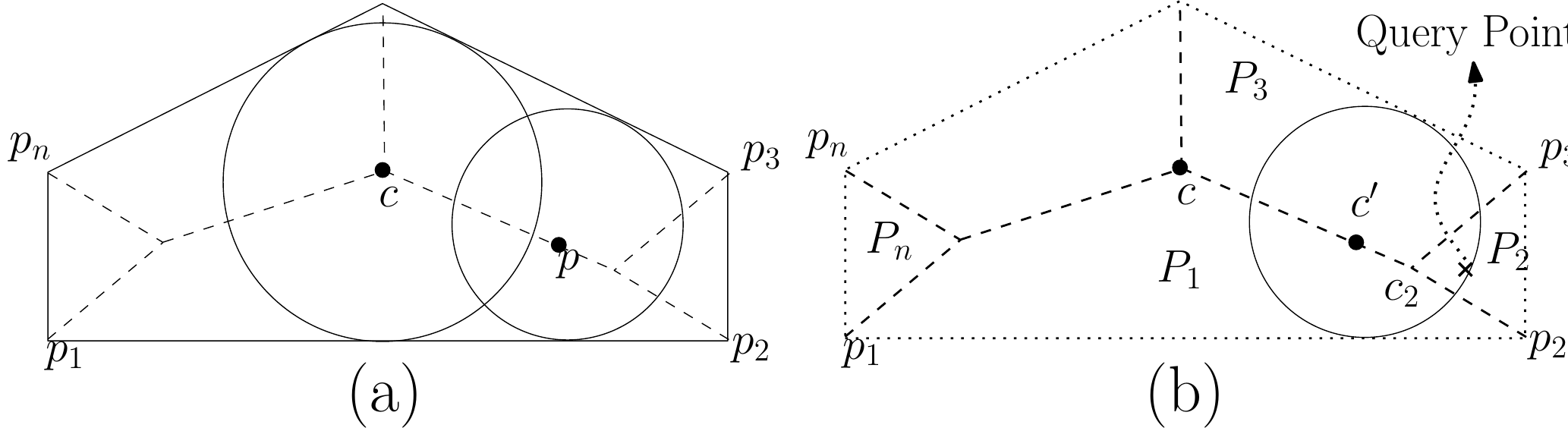}
  \caption{(a) Illustration of Observation \ref{obs1}, and  (b)
Partition of $P$.}
  \label{fig:figure1}
\end{figure}

\begin{observation} \label{obs1}
As the point $x$ moves from $c$ along the medial axis towards any
vertex $p_i \in P$ (leaf node of $M$), $A_x$ decreases
monotonically (see Figure \ref{fig:figure1}(a)).
\end{observation}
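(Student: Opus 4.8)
The plan is to argue about the radius rather than the area. Since $MEC_x$ is the disk of radius $r(x):=\mathrm{dist}(x,\partial P)$ centred at $x$, we have $A_x=\pi\,r(x)^2$, so it suffices to show that $r$ is non-increasing (strictly so under general position) as $x$ traverses the medial-axis path $\gamma$ from $c$ to a leaf $p_i$. Two structural facts set the stage. First, on each edge $\sigma$ of $M$ the value $r$ is an affine function of arc length: $\sigma$ is a segment of the bisector of two polygon edges $e,e'$, and for $x\in\sigma$ we have $r(x)=\mathrm{dist}(x,\ell)$ for the supporting line $\ell$ of $e$, which is affine; hence $r|_M$ is piecewise linear, linear on every tree edge. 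Second, $c$ is by definition the point of $P$ maximising $r$ (the distance-to-boundary function $r=\min_j \mathrm{dist}(\cdot,\ell_j)$ is in fact concave on the convex polygon $P$, though below we only use that $r(c)$ is the global maximum). Throughout I will assume general position — the uniqueness convention of the footnote, together with no two edges of $P$ being parallel — the general case following by a perturbation of $P$, with ``strictly decreasing'' weakening to ``non-increasing''.

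By the first fact, monotonicity of $r$ along $\gamma$ follows once we show $r$ is non-increasing on each edge of $\gamma$; rooting $M$ at $c$, this amounts to proving $r(u)\le r(\mathrm{parent}(u))$ for every node $u\neq c$. Call an edge of $M$ incident to a node $w$ \emph{ascending at $w$} if $r$ strictly increases as one leaves $w$ along it. The whole argument then rests on one claim: \textbf{at most one edge incident to any node of $M$ is ascending at that node.} Granting this, suppose for contradiction that some node $u\neq c$ has $r(u)>r(\mathrm{parent}(u))$. Walk from $u$ to the root along $M$, getting a simple path $u=u_0,u_1=\mathrm{parent}(u),\dots,u_m=c$; put $a_i=r(u_i)$ and let $s\in\{1,\dots,m-1\}$ be least with $a_s<a_{s+1}$, which exists because $a_1<a_0\le r(c)=a_m$ rules out $a_1\ge\cdots\ge a_m$. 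Then $a_{s-1}>a_s<a_{s+1}$: for $s=1$ the left inequality is the assumed $r(u)>r(\mathrm{parent}(u))$, and for $s\ge2$ minimality of $s$ gives $a_{s-1}\ge a_s$, strict by general position. Hence both edges $u_{s-1}u_s$ and $u_su_{s+1}$ are ascending at $u_s$, contradicting the claim. Therefore $r(u)\le r(\mathrm{parent}(u))$ for all nodes $u$, so $r$ is non-increasing along $\gamma$, and $A_x=\pi\,r(x)^2$ finishes it.

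The main obstacle is the boxed claim, which I would prove by a local computation at a node $w$. Take $w$ to be a generic (degree-three) node, equidistant from polygon edges $e_1,e_2,e_3$ with outward unit normals $\nu_1,\nu_2,\nu_3$; the three incident medial edges are the bisectors $\sigma_{12},\sigma_{13},\sigma_{23}$, and along $\sigma_{jk}$, in its outward direction $d_{jk}$, one checks that the directional derivative of $r$ at $w$ equals $-\langle d_{jk},\nu_j\rangle=-\langle d_{jk},\nu_k\rangle$. Thus $\sigma_{jk}$ being ascending at $w$ means $d_{jk}\in H_j:=\{d:\langle d,\nu_j\rangle<0\}$ (and likewise $d_{jk}\in H_k$). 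Now any two of the three incident edges share a common site, say $e_j$; if both were ascending at $w$, both of their directions would lie in the convex cone $H_j$. But those two edges bound, at $w$, the local Voronoi cell of $e_j$, which near $w$ is an intersection of two halfplanes, hence a convex wedge of angle $<\pi$, hence equal to the cone spanned by those two edge directions, hence contained in $H_j$. On the other hand, walking from $w$ straight toward $e_j$ (direction $\nu_j$) keeps $e_j$ strictly the closest polygon edge, so $\nu_j$ lies in that wedge — forcing $\nu_j\in H_j$, i.e.\ $\langle\nu_j,\nu_j\rangle<0$, which is absurd. Degenerate nodes (degree $>3$, or a bisector edge on which $r$ is constant because the two bisected edges are parallel) are absorbed by the perturbation argument and the uniqueness convention already in force. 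This node-level statement is where essentially all the work lies; everything else is bookkeeping on the tree $M$, and the convexity of $P$ enters exactly through the convexity of the local Voronoi wedge and through $r(c)$ being the global maximum.
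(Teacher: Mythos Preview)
Your argument is correct. The paper's own proof is the single sentence ``Follows from the convexity of the polygon $P$,'' so there is no detailed approach to compare against; you have supplied the work the paper leaves implicit. Your local claim (at most one ascending medial-axis edge at each node) and its proof via the Voronoi wedge of the shared site are sound --- the wedge is indeed the convex cone on the two bisector directions because all three local Voronoi sectors at a generic node have angle $<\pi$ --- and the tree argument deriving global monotonicity from it is clean.

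If you want a shorter route closer to what the authors presumably had in mind: since $r(x)=\min_j\mathrm{dist}(x,\ell_j)$ is concave on the convex polygon $P$, each superlevel set $P_t=\{r\ge t\}$ is itself a convex polygon (the inward offset of $P$ by $t$), and one checks that the medial axis of $P_t$ is exactly $\{x\in M:r(x)\ge t\}$. The medial axis of any simple polygon is a tree, hence connected, so $\{x\in M:r(x)\ge t\}$ is a connected subtree of $M$ containing $c$ for every $t\le r(c)$; this is equivalent to monotonicity of $r$ along every root-to-leaf path. Your approach has the virtue of being entirely local and self-contained, with no appeal to properties of offset polygons; the offset argument makes the role of convexity more transparent and explains why the paper treats the observation as not requiring proof.
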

\begin{proof}
Follows from the convexity of the polygon $P$. \qed
\end{proof}

The medial axis $M$ partitions the polygon $P$ into $n$ convex
sub-polygons 
such that each sub-polygon $P_i$ consists of a polygonal edge
$p_ip_{i+1}$ 
and two convex chains of $M$, one starting at $p_i$ and other starting
at 
$p_{i+1}$ (see Figure \ref{fig:figure1}(b)). This partitioning can be 
achieved in $O(n)$ time since $M$ can be computed in linear time 
\cite{CSW99}. Moreover, $M$ can be preprocessed in $O(n)$ time so
that 
the sub-polygon containing any query point $q$ can be located in 
$O(\log n)$ time \cite{Krik83}. 

\begin{lemma}\label{llx}
The polygon $P$ can be partitioned in $O(n)$ time such that given any
arbitrary query point $q$, the edge of $M$ closest to $q$ can be 
reported in $O(\log n)$ time. 
\end{lemma}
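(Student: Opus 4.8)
The plan is to reuse the partition of $P$ into the convex sub-polygons $P_1,\dots,P_n$ already built in $O(n)$ time (together with the $O(\log n)$-time structure of \cite{Krik83} that reports the $P_i$ containing $q$), and to refine it slightly. The first step is a locality claim: if $q$ lies in $P_i$, then the edge of $M$ closest to $q$ is one of the edges of $M$ on the boundary of $P_i$, i.e.\ an edge of the convex chain $C_i$ that together with $p_ip_{i+1}$ bounds $P_i$. To see this, note that the interior of $P_i$ is the set of points of $P$ strictly closer to $p_ip_{i+1}$ than to any other edge, so it contains no point of $M$ and $M\cap P_i=C_i$. Now take any $m\in M$ not in $P_i$: the segment $qm$ lies in the convex region $P$, so it cannot cross the boundary edge $p_ip_{i+1}$, and therefore it must cross $C_i$ at a point $m'\in M$ with $|qm'|<|qm|$. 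Hence the distance from $q$ to $M$ is attained on $C_i$, and the sought edge is the edge of $C_i$ containing the nearest point of $C_i$ to $q$ (ties broken arbitrarily; a query point outside $P$ is detected by the point location and answered trivially).

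It remains, for $q\in P_i$, to report the edge of the convex chain $C_i$ nearest to $q$ within $O(\log n)$ time. I would refine $P_i$ into faces according to which feature of $C_i$ is the nearest one to the points of the face; this is the Voronoi diagram of the $O(|C_i|)$ edge-segments of $C_i$ clipped to $P_i$, so it has $O(|C_i|)$ faces, each labelled by the corresponding edge of $M$. Since every edge of $M$ lies on the boundary of exactly two sub-polygons, $\sum_i|C_i|=O(n)$, so the refinement of the whole partition of $P$ still has $O(n)$ complexity; exploiting that each $C_i$ is a convex chain, this refinement can be produced in $O(|C_i|)$ time per piece (by divide-and-conquer on the chain, in the spirit of the linear-time medial-axis construction of \cite{CSW99}), hence in $O(n)$ time overall, and the point-location structure of \cite{Krik83} for it is then built in $O(n)$ time. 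A query is answered by a single point location, reporting the label of the face containing $q$, in $O(\log n)$ time. (Alternatively, after locating $P_i$ one may binary-search the convex chain $C_i$ directly for its nearest edge to $q$, avoiding the refinement altogether.)

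I expect the main obstacle to be the locality claim — making sure that an edge of $M$ lying outside $P_i$ can never be the closest one to a point of $P_i$ — since this is precisely the place where the convexity of $P$ is used; the secondary technical point is the linear-time construction of the nearest-feature refinement of the convex chains, which is what keeps the total preprocessing at $O(n)$ rather than $O(n\log n)$. The remaining details are routine bookkeeping across the $P_i$.
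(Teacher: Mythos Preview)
Your proposal is correct and follows essentially the same approach as the paper: partition $P$ by the medial axis into the convex sub-polygons $P_i$, then inside each $P_i$ build the nearest-edge subdivision with respect to the chain $C_i$ in linear total time and answer queries by planar point location via \cite{Krik83}. The only notable differences are cosmetic: you spell out the locality argument (which the paper leaves implicit), and for the linear-time per-piece construction the paper invokes \cite{AGSS} (linear-time Voronoi diagram of a convex polygon) rather than the \cite{CSW99}-style argument you sketch.
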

\begin{proof}
We consider each $P_i$ separately, and compute the medial axis of
the convex chain from $p_i$ to $p_{i+1}$ (a portion of $M$). This
needs $O(\mu_i)$ time \cite{AGSS}, where $\mu_i$ is the number of
nodes in $M$ that appear as the vertices of $P_i$. Thus, for the 
entire polygon $P$, the total time complexity is $O(\sum_{i=1}^n 
\mu_i) = O(n)$, since the number of edges of $M$ is $O(n)$, and 
each edge of $M$ appears in exactly two sub-polygons. If the 
query point $q$ appears in $P_i$, then we can locate the edge of 
$M$ that is closest to $q$ in $O(\log \mu_i)$ time using point location in planar subdivision \cite{Krik83}. 
\qed
\end{proof}

Now we will describe how to solve the QMEC problem for a convex
polygon. Assume that the query point $q$ lies inside the sub-polygon
$P_i$, that is incident to the edge $p_ip_{i+1}$ of $P$. Let $c'$
denote the center of the largest MEC containing $q$. Note that, $c'$ 
will lie either on the path from $p_i$ to $c$ (denoted by $p_i \sim 
c$) or on the path from $p_{i+1}$ to $c$ ($p_{i+1} \sim c$) on $M$. 
Let us assume that $c'$ lie on the path $p_i \sim c$. We use Lemma 
\ref{llx} to identify a point $x$ on the path  $p_i \sim c$ that is 
closest to $q$ in $O(\log n)$ time. The $MEC_x$ must contain $q$. 

By Observation \ref{obs1}, we can locate $c'$ by performing a binary
search on the path $c \sim x$ that finds two consecutive nodes $v$ 
and $v'$ on the path such that $MEC_v$ encloses $q$, but 
$MEC_{v'}$ does not. In degenerate case $v$ may be $x$ and $v'$ is 
its previous node on the path $c\sim x$. Since the path lies on a 
tree representing the medial axis $M$, we can use level-ancestor 
queries \cite{BF04} for this purpose. After computing $v$ and $v'$, 
the exact location of $c'$ can be determined in $O(1)$ time. Thus, 
we have the following theorem:

\begin{theorem} \label{th-QMEC-convex}
A convex polygon on $n$-vertices can be preprocessed in $O(n)$ time
and space so that the QMEC queries can be answered in $O(\log n)$
time.
\end{theorem}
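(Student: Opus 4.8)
The plan is to assemble Theorem~\ref{th-QMEC-convex} from the three ingredients already developed: the linear-time medial axis construction, the $O(n)$ preprocessing for closest-edge-of-$M$ queries from Lemma~\ref{llx}, and the monotonicity of the MEC area along $M$ from Observation~\ref{obs1}. First I would handle the preprocessing. Compute the medial axis $M$ in $O(n)$ time via \cite{CSW99}, identify its root $c$ (the center of the largest MEC), and orient $M$ as a tree rooted at $c$. Partition $P$ into the $n$ convex sub-polygons $P_i$ in $O(n)$ time, and build the point-location structure of \cite{Krik83} so that the sub-polygon containing a query point is found in $O(\log n)$ time. Within each $P_i$, run the preprocessing of Lemma~\ref{llx}; since the work is $O(\mu_i)$ per sub-polygon and $\sum_i \mu_i = O(n)$, the total is $O(n)$ time and space. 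Finally, equip the tree $M$ with a level-ancestor structure \cite{BF04}, which also takes $O(n)$ time and space and supports $O(1)$-time level-ancestor queries. This establishes the claimed $O(n)$ preprocessing bound.

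Next I would verify the query procedure and its $O(\log n)$ bound. Given $q$: if $q$ lies outside or on $\partial P$, report a circle of infinite radius through $q$ in $O(1)$ time. Otherwise, locate the sub-polygon $P_i$ containing $q$ in $O(\log n)$ time. The center $c'$ of the largest MEC containing $q$ lies on one of the two root-paths $p_i \sim c$ or $p_{i+1} \sim c$ in $M$; handle each path and keep the better answer. Using Lemma~\ref{llx}, find in $O(\log n)$ time the point $x$ on the relevant path nearest to $q$, so that $MEC_x$ contains $q$. By Observation~\ref{obs1}, along the path $c \sim x$ the function $A_y$ is monotone, so exactly one of two adjacent nodes $v, v'$ has $MEC_v \ni q$ and $MEC_{v'} \not\ni q$; binary-searching the path by depth, issuing level-ancestor queries at each of the $O(\log n)$ steps and performing a containment test in $O(1)$, locates $v$ and $v'$. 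From these two nodes the exact center $c'$ on the edge $vv'$ is computed in $O(1)$ time by moving toward $v'$ until the expanding MEC would drop $q$, and the largest MEC is reported. Summing, the query time is $O(\log n)$.

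The step needing the most care is arguing that the binary search along the path is actually valid, i.e.\ that the predicate ``$MEC_y$ contains $q$'' is monotone in the depth of $y$ along $c \sim x$. Observation~\ref{obs1} gives monotonicity of the \emph{radius} (equivalently area) along $M$ away from $c$, but one must check that containment of the \emph{fixed} point $q$ behaves monotonically: as $y$ moves from $c$ toward $x$, the MEC shrinks, yet its center also migrates, so a priori $q$ could leave and re-enter. The resolution is that $x$ is the point of the path closest to $q$; for any $y$ on $c \sim x$, the segment of $M$ from $y$ to $x$ together with the fact that $MEC_x \ni q$ and that $MEC_y \supseteq$ the disk of radius $\mathrm{dist}(y,\partial P) \ge \mathrm{dist}(y,q)$ when $y$ is an ancestor of $x$ forces $q \in MEC_y$; I would spell out this geometric inclusion carefully, using that the MEC centered at $y$ has radius equal to the clearance of $y$ and that along the path the clearance plus the advance toward $x$ dominates the distance to $q$. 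Once that monotonicity is nailed down, the remaining steps are routine bookkeeping of the stated time bounds.

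\begin{proof}
\textbf{Preprocessing.} Compute the medial axis $M$ of $P$ in $O(n)$ time \cite{CSW99}, determine its root $c$, and view $M$ as a tree rooted at $c$ whose leaves are the vertices of $P$. Partition $P$ into the $n$ convex sub-polygons $P_1,\dots,P_n$ in $O(n)$ time, and build a point-location structure \cite{Krik83} answering ``which $P_i$ contains $q$'' in $O(\log n)$ time. Within each $P_i$ run the preprocessing of Lemma~\ref{llx}; the total cost is $O(\sum_i \mu_i)=O(n)$ time and space. Finally preprocess the tree $M$ for level-ancestor queries \cite{BF04} in $O(n)$ time and space. Hence the preprocessing uses $O(n)$ time and $O(n)$ space.

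\textbf{Query.} Given $q$, if $q\notin\mathrm{int}(P)$ report an infinite circle through $q$. Otherwise locate $P_i\ni q$ in $O(\log n)$ time. The center $c'$ of the largest MEC containing $q$ lies on $p_i\sim c$ or on $p_{i+1}\sim c$; process both and keep the larger circle. For the chosen path, use Lemma~\ref{llx} to find in $O(\log n)$ time the point $x$ on it nearest to $q$, so $MEC_x\ni q$. Along $c\sim x$ the clearance, and hence $A_y$, is monotone by Observation~\ref{obs1}, and for every ancestor $y$ of $x$ on this path the disk $MEC_y$ contains $MEC_x$ and therefore $q$; thus the predicate ``$MEC_y\ni q$'' is monotone in the depth of $y$, and a binary search over the $O(\log n)$ depths, each step a level-ancestor query \cite{BF04} plus an $O(1)$ containment test, yields consecutive nodes $v,v'$ with $MEC_v\ni q$ and $MEC_{v'}\not\ni q$. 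Moving the center from $v$ toward $v'$ along the edge $vv'$, the expanding-then-contracting MEC and the continuity of $\mathrm{dist}(\cdot,q)$ pin down $c'$ in $O(1)$ time, and the largest MEC containing $q$ is reported. The total query time is $O(\log n)$. \qed
\end{proof}
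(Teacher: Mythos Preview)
Your overall approach mirrors the paper's almost exactly: compute the medial axis, partition $P$ into the sub-polygons $P_i$, use Lemma~\ref{llx} to find the point $x$ on the relevant root-path with $q\in MEC_x$, and then binary-search the path $c\sim x$ with level-ancestor queries. The preprocessing accounting is fine.

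However, the step you yourself flagged as ``needing the most care''---the monotonicity of the predicate ``$MEC_y\ni q$'' along $c\sim x$---is where the formal proof goes wrong. You assert that ``for every ancestor $y$ of $x$ on this path the disk $MEC_y$ contains $MEC_x$ and therefore $q$.'' This containment is false in general: in a long thin isoceles triangle, the incircle $MEC_c$ does not contain the small MECs centered near the apex, since those small disks lie well outside the incircle. More tellingly, if your claim held then $q\in MEC_y$ for \emph{every} $y$ on $c\sim x$, so $C_q=MEC_c$ always, and the binary search you go on to describe would be vacuous---the argument is internally inconsistent. (The phrase ``expanding-then-contracting MEC'' on the edge $vv'$ is also off: moving from $v$ toward $v'$, i.e.\ toward $c$, the radius is monotonically increasing by Observation~\ref{obs1}; the point $q$ is lost not because the disk shrinks but because its center drifts away.)

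The correct justification is the one the paper later states as Lemma~\ref{l1} for simple polygons (and which holds a fortiori in the convex case): the set $M^q=\{y\in M: q\in MEC_y\}$ is a connected subtree of $M$. Since $x\in M^q$, the intersection of $M^q$ with the path $c\sim x$ is a sub-path containing $x$; hence the predicate ``$MEC_y\ni q$'' flips from true to false exactly once as $y$ moves from $x$ toward $c$, and the binary search is valid. The paper itself is terse here---it cites only Observation~\ref{obs1}, which gives monotonicity of the radius, not of containment of $q$---so your instinct to scrutinize this step was exactly right; it is only the containment-of-disks argument you supplied that fails.
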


\section{QMEC problem for simple polygon}
\label{sec:simple-polygon-case}
Our approach for solving the QMEC problem in a simple polygon $P$
is based on the divide and conquer strategy, and it uses the tree
structure of the medial axis $M$. Here again the {\it leaf nodes} 
correspond to the vertices of the polygon. The {\it internal nodes}
correspond to the points on $M$ such that the MEC centered at each 
of those points touches 3 or more distinct points on the boundary 
of $P$; We use $\N$ to denote the set of internal nodes of $M$.

For the sake of simplicity in analyzing the algorithm, we assume 
that the MECs centered at the internal nodes of $M$ have distinct 
radii. A point $x \in M$, that is not a leaf, is said to be a {\em
valley point} if for a sufficiently small $\delta > 0$, the MECs
centered at points in $M$ within a distance $\delta$ from $x$ are at
least as large as $MEC_x$. We can similarly define the {\em peaks} in
$M$. We assume that the number of peaks and valley points are finite.
We use $\Phi$ and $\Theta$ to denote the set of valleys and peaks 
respectively. It is easy to observe that $\Phi \cap \N = 
\emptyset$, but $\Theta \subseteq \N$. 

Finally, we define a {\em mountain} to be a maximal subtree of 
$M$ that does not contain any valley point except its leaves. 
Notice that,
\begin{itemize}
\item[(i)] Each mountain has exactly one peak.
\item[(ii)] Each valley point is common to exactly two mountains, 
and it is a leaf for both the mountains.
\item[(iii)] If a point $x$ proceeds from a valley point of a 
mountain toward its peak, the size of $MEC_x$ increases. 
\end{itemize}
Thus, if we partition $M$ by cutting the tree at all the valley 
points, we get a set of mountains $\M = \{M_1, M_2, \ldots, 
M_{|\M|}\}$ (See Figure \ref{fig:MedialAxisPartitioning}(a)). 

\begin{figure}[t]
\vspace{-0.1in}
\begin{minipage}[c]{0.5\textwidth}
\begin{center} %
\includegraphics[height=1.2in]{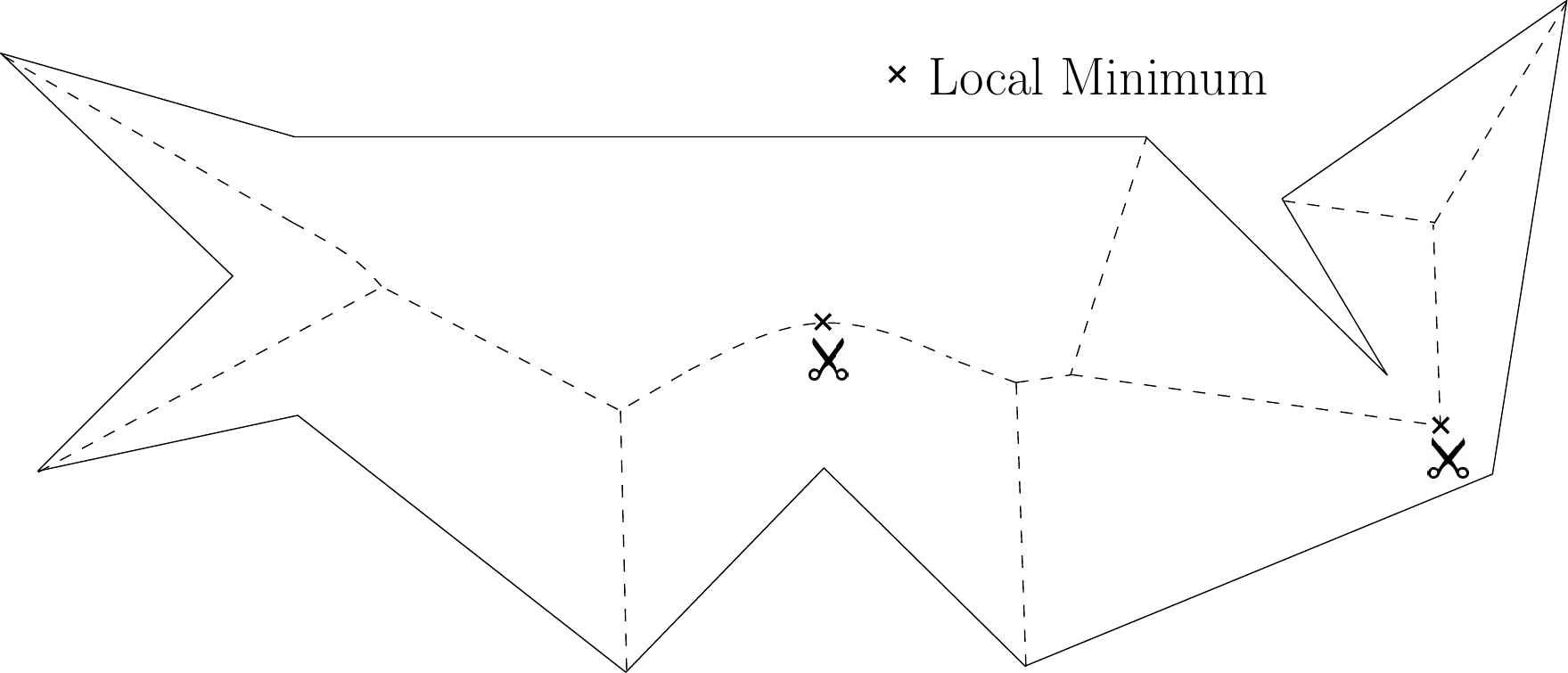}\\
(a)  
\end{center}
\end{minipage}% 
\begin{minipage}[c]{0.5\textwidth}
\begin{center} %
\includegraphics[height=1.2in]{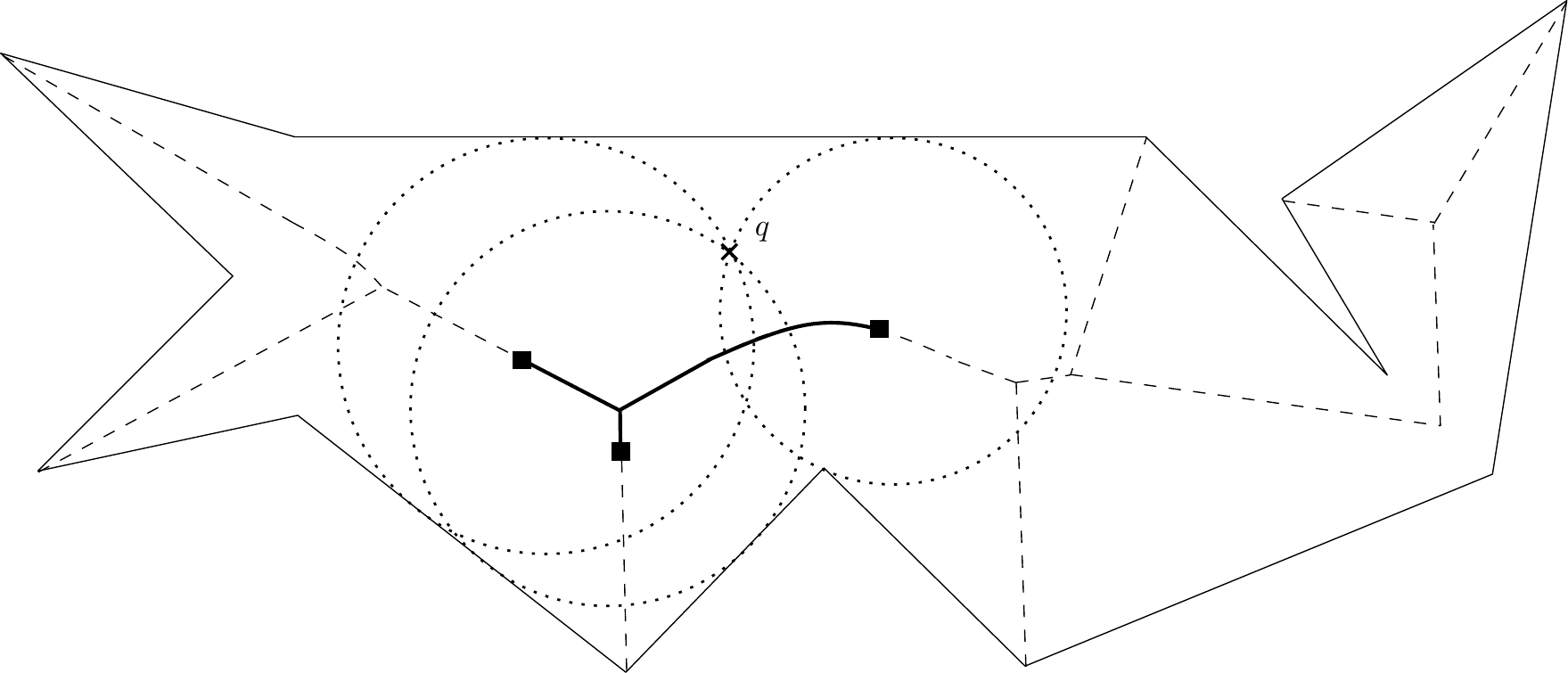}\\
(b)  
\end{center}
\end{minipage}
\vspace{-0.1in}
\caption{(a) Partitioning the medial axis $M$, and (b) The subtree $M^q$ for a query point $q$}
\vspace{-0.1in}
\label{fig:MedialAxisPartitioning}
\end{figure}

We also need to consider another way of splitting the tree $M$ 
as stated in Lemma \ref{Jordon}. This aids in designing a data 
structure $\T$ for the query algorithm.
\begin{lemma} \label{Jordon} \cite{jordon}
Every tree $T$ with $n$ nodes has at least one node $\pi$ whose 
removal splits the tree into subtrees with at most 
$\lceil\frac{n}{2}\rceil$ nodes. The node $\pi$ is called the 
centroid of $T$.
\end{lemma}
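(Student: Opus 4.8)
The plan is to give the standard extremal argument for Jordan's centroid theorem. First, for a node $v$ of $T$ I would define $\phi(v)$ to be the maximum, over all connected components $K$ of the forest $T-v$, of the number of nodes of $K$; this is a nonnegative integer for every $v$, and a node $\pi$ meeting the conclusion is precisely one with $\phi(\pi)\le\lceil n/2\rceil$. So I would take $\pi$ to be a node minimising $\phi$, which reduces the whole task to proving $\phi(\pi)\le\lceil n/2\rceil$.

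For that I would argue by contradiction, assuming $\phi(\pi)\ge\lceil n/2\rceil+1$. Among the components of $T-\pi$ I would pick one, $C$, of maximum size $|C|=\phi(\pi)$; since $T$ is connected, exactly one edge joins $\pi$ to $C$, with endpoint $u\in C$. The key computation is to bound $\phi(u)$: because $\pi$'s only neighbour inside $C$ is $u$, deleting $u$ disconnects $\pi$ from all of $C-u$, so the component of $T-u$ containing $\pi$ consists of everything outside $C$ and has exactly $n-|C|$ nodes, while every other component of $T-u$ is a piece of the forest $C-u$ and hence has at most $|C|-1$ nodes. Thus $\phi(u)\le\max\{\,n-|C|,\ |C|-1\,\}$, and the arithmetic $n-|C|\le\lfloor n/2\rfloor-1<\lceil n/2\rceil\le|C|-1$ (valid because $|C|\ge\lceil n/2\rceil+1$) collapses this to $\phi(u)\le\phi(\pi)-1<\phi(\pi)$, contradicting the minimality of $\pi$. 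Hence $\phi(\pi)\le\lceil n/2\rceil$, and removing $\pi$ leaves only subtrees of at most $\lceil n/2\rceil$ nodes, as claimed.

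There is no genuinely hard step here; the only places needing care are the floor/ceiling arithmetic that forces $\max\{n-|C|,|C|-1\}$ to equal $|C|-1$, and the structural observation that $\pi$ has a unique neighbour in $C$, so that the component of $T-u$ containing $\pi$ has exactly $n-|C|$ nodes and not more. It is worth recording that the same reasoning is constructive: starting from an arbitrary node and repeatedly walking to the neighbour lying in the current heaviest component strictly decreases $\phi$, so a centroid is reached after at most $n$ steps; equivalently, a single depth-first traversal recording subtree sizes locates $\pi$ in $O(n)$ time, which is the form in which we shall invoke the lemma when recursively building the data structure $\T$.
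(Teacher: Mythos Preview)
Your argument is correct and is the standard extremal proof of Jordan's centroid theorem; the arithmetic and the structural observation about $\pi$ having a unique neighbour in $C$ are both handled properly. Note, however, that the paper does not supply its own proof of this lemma: it is stated with a citation to Jordan's 1869 paper and used as a black box, so there is no in-paper argument to compare against. Your added remark that the centroid can be located in $O(n)$ time via subtree-size computation is in fact the form in which the paper implicitly relies on the lemma when building the recursion tree~$\T$.
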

\begin{lemma} \label{l1}
If $q$ is the query point and $M^q$ is the maximal portion of $M$ 
such that MECs centered in any point on $M^q$ enclose the query 
point $q$, then $M^q$ is a connected subtree of $M$ (see Figure 
\ref{fig:MedialAxisPartitioning}(b)).
\end{lemma}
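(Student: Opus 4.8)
The plan is to show that $M^q$ is connected by contradiction: suppose $M^q$ has two connected components, and derive a contradiction from the structure of the medial axis together with the fact that $q$ is a single fixed point. First I would set up the geometry precisely. For a point $x \in M$, the disk $MEC_x$ encloses $q$ iff $\|x - q\| \le \rho(x)$, where $\rho(x)$ denotes the radius of $MEC_x$, i.e. the distance from $x$ to the boundary $\partial P$. So $M^q = \{x \in M : \|x-q\| \le \rho(x)\}$. The key analytic fact I want is that along any simple path in the tree $M$, the function $x \mapsto \rho(x) - \|x-q\|$ has no strict interior local minimum that is negative while both path-endpoints give positive values — more precisely, I want to rule out the pattern ``inside, then outside, then inside'' along a path in $M$.

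The main step is the following claim: if $a, b \in M^q$ and $\gamma$ is the unique path in the tree $M$ from $a$ to $b$, then $\gamma \subseteq M^q$. Granting this claim, connectedness is immediate: any two points of $M^q$ are joined by a path lying entirely in $M^q$, and since $M$ is a tree this path is the unique $a$-$b$ path in $M$, so $M^q$ is a (connected) subtree. To prove the claim, take any point $z$ on the path $\gamma$ between $a$ and $b$; I must show $MEC_z \ni q$. The idea is that the center of the maximal empty circle moves ``monotonically toward the far obstacle'' — more concretely, I would use the standard fact that $\rho$ is the distance-to-boundary function restricted to $M$, that it is $1$-Lipschitz, and that along an edge of $M$ it is linear (since each edge of $M$ equidistant from a fixed pair of features is linear in arclength). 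The crucial geometric input is that $MEC_a$ and $MEC_b$ both contain $q$; I would argue that the union of the family $\{MEC_x : x \in \gamma\}$ is itself ``fat'' enough that once the two extreme disks cover $q$, every intermediate one does too. A clean way to see this: parametrize $\gamma$ by arclength $t \in [0,L]$, let $f(t) = \rho(\gamma(t)) - \|\gamma(t) - q\|$, so $f(0) \ge 0$ and $f(L) \ge 0$, and I want $f \ge 0$ on $[0,L]$. Here $\|\gamma(t)-q\|$ is convex as a function of position but $\gamma$ bends at nodes, so I would instead work segment by segment on $M$, using that on a single edge of $M$ the function $\rho$ is affine in $t$ while $\|\gamma(t)-q\|$ is convex; hence $f$ is concave on each edge, so its minimum on an edge is at an endpoint (a node of $M$). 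Therefore it suffices to prove the claim when $a, b$ are nodes of $M$ and $z$ is a node on the $a$-$b$ path — i.e. the real content is at the branch nodes.

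The step I expect to be the main obstacle is exactly this node-level argument: showing that if two nodes $u, w$ of $M$ lie on a common path through a node $v$, and $MEC_u \ni q$ and $MEC_w \ni q$, then $MEC_v \ni q$. At a node $v$ the disk $MEC_v$ is the largest disk centered at $v$ avoiding $\partial P$ and touching $\partial P$ in (at least) two features; moving from $v$ toward $u$ the disk is pinned to one touching feature, and moving toward $w$ it is pinned to another. I would exploit that: the feature $g_w$ that pins $MEC_w$ lies on one ``side'' of $v$ and bounds how big disks centered near $v$ on the $v$-$u$ branch can be, but since $q \in MEC_w$ and $q$ lies on the $u$-side, the relevant constraint is the feature pinning $MEC_u$, call it $g_u$; a direct distance computation — comparing $\|v-q\|$, $\mathrm{dist}(v, g_u)$, $\|u-q\|$, $\mathrm{dist}(u,g_u)$ along the straight segment of $M$ from $v$ toward $u$ — should give $\|v - q\| \le \mathrm{dist}(v,g_u) = \rho(v)$, using the concavity already established on that segment together with $f$ being $\ge 0$ at both its endpoints $v$ (to be shown) and $u$ (given). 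Since this is circular, the honest structure is an induction on the tree: root $M^q$ claims at leaves/given points, then propagate concavity edge-by-edge toward a postulated ``outside'' node and reach a contradiction with $f(0),f(L)\ge 0$. I would also double-check the boundary behavior — $\rho \to 0$ at the leaves of $M$ (polygon vertices), which is consistent with $f$ being small there and makes the ``inside–outside–inside'' pattern the only thing to exclude — and handle the degenerate case of $q$ lying exactly on $M$ separately, where $M^q$ trivially contains $q$'s closest point and the argument simplifies.
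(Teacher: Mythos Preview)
The paper does not actually give a proof of Lemma~\ref{l1}; it is stated and then immediately used. So there is nothing to compare against, and your proposal has to stand on its own.

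Your overall plan is the right one: to show that for $a,b\in M^q$ the unique tree path $\gamma$ from $a$ to $b$ lies entirely in $M^q$. The gap is in the edge-wise step. You assert that on a single edge of $M$ the function $\rho(\gamma(t))$ is affine in arclength, so that $f(t)=\rho(\gamma(t))-\|\gamma(t)-q\|$ is concave and attains its minimum at an endpoint. That is true on edges of $M$ that are bisectors of two polygon \emph{edges}, but for a general simple polygon $M$ also contains edges equidistant from a reflex vertex $v$ and an edge (parabolic arcs) and edges equidistant from two reflex vertices (straight, but with $\rho(\gamma(t))=\|\gamma(t)-v\|$, which is convex, not affine). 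On a parabolic arc one has $f(t)=\|\gamma(t)-v\|-\|\gamma(t)-q\|$, whose sign is governed by which side of the perpendicular bisector of $vq$ the point $\gamma(t)$ lies on; a line can meet a parabola twice, so the set $\{t:f(t)\ge 0\}$ can be two disjoint subarcs of a single edge, i.e.\ the pattern positive--negative--positive that you need to exclude can occur on one edge. (Concretely: take directrix $y=0$, focus $v=(0,1)$, and $q=(0,10)$; then $f<0$ near the vertex of the parabola and $f>0$ on both arms.) So the concavity reduction to nodes does not go through, and the induction you sketch inherits the same defect.

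A route that does work is topological rather than analytic. For $z$ on the tree path between $a$ and $b$, the maximal disk $MEC_z$ touches $\partial P$ in at least two points, and $P\setminus \overline{MEC_z}$ breaks into components that correspond to the components of $M\setminus\{z\}$; in particular $a$ and $b$ lie in different components. Since $MEC_a\setminus MEC_z$ and $MEC_b\setminus MEC_z$ are each connected (a disk minus a disk), they lie in the $a$-component and the $b$-component respectively. If $q\notin MEC_z$ then $q$ would have to lie in both components, which is impossible; hence $q\in MEC_z$. This is essentially the separation idea behind the paper's Unique Path Lemma (Lemma~\ref{lem:onepath}) in the point-set section, transported to the medial-axis tree, and it avoids any concavity hypothesis.
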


\begin{lemma}\label{l2}
If $q$ falls outside the $MEC_\pi$, then $M^q$ is contained 
entirely in one of the subtrees obtained by deleting $\pi$ 
from $M$.
\end{lemma}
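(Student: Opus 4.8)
The plan is to combine Lemma~\ref{l1} with the Jordan-style splitting in Lemma~\ref{Jordon}. By Lemma~\ref{l1}, $M^q$ is a connected subtree of $M$. Removing the centroid $\pi$ from $M$ leaves several subtrees $T_1,\ldots,T_k$; each point of $M$ other than $\pi$ belongs to exactly one $T_j$, and no edge of $M$ joins two distinct $T_j$'s. If $q$ lies outside $MEC_\pi$, then $\pi \notin M^q$ (since every point on $M^q$ has its MEC enclosing $q$, by the very definition of $M^q$). Hence $M^q \subseteq M \setminus \{\pi\} = T_1 \cup \cdots \cup T_k$, a disjoint union of connected pieces that are pairwise non-adjacent in $M$.

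Now I would argue that a connected subset of $M$ that avoids $\pi$ must lie inside a single $T_j$. Suppose $M^q$ met two different subtrees, say it contained a point $a \in T_i$ and a point $b \in T_j$ with $i \ne j$. Since $M^q$ is connected (it is itself a subtree of $M$ by Lemma~\ref{l1}), the unique path in $M$ from $a$ to $b$ lies entirely within $M^q$. But in the tree $M$, the path from any vertex of $T_i$ to any vertex of $T_j$ with $i\neq j$ must pass through $\pi$ — this is exactly what it means for $\pi$ to separate these subtrees. So that path, and therefore $M^q$, would contain $\pi$, contradicting $\pi \notin M^q$. Therefore $M^q$ is contained in exactly one $T_j$, which is one of the subtrees obtained by deleting $\pi$ from $M$.

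The only genuinely delicate point is the claim $\pi \notin M^q$, which rests on reading the definition of $M^q$ correctly: $M^q$ is the \emph{maximal} portion of $M$ all of whose points have MEC enclosing $q$, so membership in $M^q$ is equivalent to ``$MEC_x$ encloses $q$'' for points $x$ in the relevant connected component; since $q$ falls outside $MEC_\pi$, the point $\pi$ cannot be in $M^q$. (One should be slightly careful about the boundary case where $q$ lies exactly on $MEC_\pi$; the paper's standing non-degeneracy assumptions, together with the convention that ``encloses'' is taken in the open sense, dispose of this.) Everything else is the standard fact that a connected subgraph of a tree avoiding a cut vertex lies in one component, so the proof is short once the separation property of the centroid is invoked. \qed
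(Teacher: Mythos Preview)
Your proof is correct and follows exactly the approach the paper intends: the paper's own proof is the one-liner ``Follows from the connectedness of $M^q$ (see Lemma~\ref{l1}),'' and you have simply unpacked that sentence by observing $\pi\notin M^q$ and then applying the standard tree-separation fact. There is no substantive difference in method.
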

\begin{proof}
Follows from the connectedness of $M^q$ (see Lemma \ref{l1}). \qed
\end{proof}
Lemmata \ref{l1} and \ref{l2} lead to the following divide and 
conquer algorithm for the {\tt QMEC} problem for a simple polygon.

In the preprocessing phase, we first compute the medial axis $M$. 
Next, we create a tree $\T$ whose root node is the centroid $\pi$ 
of $M$. The children of the root node in $\T$ are the centroids 
of the subtrees obtained by deleting $\pi$ from $M$. The process 
continues up to the leaf level. Each node $v \in {\T}$ is attached 
with $MEC_v$. Note that, the MEC attached to the root node of $\T$ 
may not be the largest MEC in $P$. 

During the query with a point $q$, we need to consider two cases: (i)
$q$ lies inside the $MEC_v$ for some vertex $v$ of the medial axis
$M$, and (ii) $q$ does not lie in the MEC of any vertex of the medial
axis. We describe the method of computing $C_q$ in Case (i). In Case
(ii) (i.e., where Case (i) fails), then we identify the mountain $M_i$
in which $q$ lies. Next, we find $C_q$ in $M_i$ using the same method
as in the Convex polygon Case, described in Section \ref{sec:Convex}.

The method of solving Case (i) is as follows. We test whether $q$ lies
in the MEC attached to the root node of $\T$. If so, we report the
largest MEC $C_q$ using a data structure {\it Query-in-Circle} (or
{\tt QiC} in short), described below. If $q$ does not lie in the MEC
corresponding to the root node, then by Lemma \ref{l2}, we need to
search one of the subtrees of the root node. The search process
continues until a node $v'$ of $\T$ is identified such that $MEC_{v'}$
contains $q$.

During the search in the tree $\T$, suppose we have identified 
a node $v$ such that $q$ lies inside $MEC_v$. Thus, $v$ lies on the 
subtree $M^q$. Here two important things need to be noted: (i) $C_q$ 
may not be equal to $MEC_v$; it may be some other MEC of larger area 
centered on $M^q$, and (ii) $M^q$ may consist of several mountains, 
The task of the {\tt QiC} data structure attached to a node $v$ of 
$\T$ is to identify the appropriate mountains in $M^q$ for searching 
the center of $C_q$. We also need another data structure, called 
{\it MEC-in-Mountain} (or {\tt MiM} in short) that can report the 
largest MEC containing $q$ with center on a given mountain $M_i 
\in {\cal M}$, provided $M_i \cap M^q \neq \emptyset$. We now explain 
{\tt MiM} and {\tt QiC} procedures in detail, and then the divide 
and conquer procedure.

\subsection{{\tt MiM} query}\label{subsec:MiM}
Here we are given the polygon $P$ and a mountain $M_i$; we need to 
report the largest MEC centered at a point on $M_i$ provided $M_i 
\cap M^q \neq \emptyset$. Note that, if the center moves from any 
point $x \in M_i \cap M^q$ to the peak of $M_i$, the MECs' are 
strictly increasing. Thus, we can apply the algorithm proposed in 
Section~\ref{sec:Convex} to identify the largest $MEC$ containing 
$q$, and centered on $M_i \cap M^q$. The preprocessing time and 
space complexities are both $O(|M_i|)$, and the query time is 
$O(\log |M_i|)$, where $|M_i|$ denotes the number of sides of the simple polygon that induces the edges of $M_i$.

\vspace{-0.2in}
\subsection{{\tt QiC} query}\label{subsec:QiC}
Here we want to solve a subproblem in which we know that the query
point $q$ falls inside an MEC centered at a given point $v \in M$. We
are to preprocess this information. In the query phase, given a query
point $q\in MEC_v$, we are required to report $C_q$, the largest MEC
containing $q$. This problem is quite challenging since the locus
$M_v^*$ of the center of $C_q$ for possible choices of $q$ satisfying
above, is a subtree of $M$, and it may span several mountains. 

During the breadth-first search in $\T$, suppose we have already 
identified a vertex $v$ in $\T$ such that $MEC_v$ contains the
query point $q$. But $C_q$ may be some other MEC of larger area. We
need to identify $C_q$. By Lemma \ref{l1}, both the center of $C_q$
and the node $v$ of $\T$ are guaranteed to be on $M^q$. 

Let $\R_v$ be the set of radii of MECs centered at the internal 
nodes of the subtree $M_v^*$ rooted at $v$, sorted in increasing
order. 

\begin{definition} \label{def1}
An MEC $C$ is called a guiding MEC corresponding to a node $v$ 
of $M_v^*$ if 
\vspace{-0.1in}
\begin{itemize}
\item[$\bullet$] its radius is in $\R_v$,
\item[$\bullet$] every MEC in the path from $v$ to the center of $C$
(both inclusive) is no larger than $C$, and
\item[$\bullet$] $C$ overlaps with $MEC_v$. 
\end{itemize}
\end{definition}
Let $\S$ be the set of all guiding MECs of the node $v$. Note that, a
member in $\S$ may be centered at the nodes as well as on the edges on
$M_v^*$. 

\vspace{-0.15in}
\subsubsection{Preprocessing steps in {\tt QiC}}
We perform the following steps in the preprocessing phase to compute 
$\S$ attached to a node $v$ of $M$. Let $M_v^*$ be the subtree of $M$
attached to node $v$.
\begin{enumerate}
\item Perform a breadth first search in $M_v^*$ starting at $v$, 
and it recursively proceeds as follows:

At each step (at a node $v' \in M_v^*$), if $MEC_{v'}$ does not 
overlap with $MEC_v$, the recursion stops along that path; otherwise, 
two distinct cases need to be considered. We check whether the radii 
of $MEC_{v'}$ and $MEC_{v''}$ are consecutive elements in $\R_v$,
where $v''$ is the predecessor node of $v'$ in $M_v^*$.
\begin{itemize}
\item[$\bullet$]  If so, put $MEC_{v'}$ in $\S$ and recursively 
explore all the paths incident at $v'$. 
\item[$\bullet$] Otherwise, compute all the MECs with center on the 
line segment $(v,v')$ whose radius matches with the elements in the 
array $\R_v$, put them in $\S$, insert those points on $(v,v')$ as
the (dummy) nodes in the tree $M$, and then recursively explore all
the paths incident at $v'$ in $M_v^*$.  
\end{itemize}
\item Attach the mountain-id with each $C \in \S$. This is available
while performing the breath-first search. This will allow us to invoke
the {\tt MiM} query for a  particular mountain.
\item Attach each circle in $\S$ with the corresponding mountain in
$M$. 
\item Create a {\tt LCQ} data structure with the circles in $\S$, and
attach it with node $v$.
\end{enumerate}

\begin{lemma}\label{lem:BoundingS} For any $r \in \R_v$, the number of
circles in $\S$ of radius $r$ attached with node $v$ is bounded by a
constant. Furthermore, $\S$ can be computed in $O(|R_v|)$ time.
\end{lemma}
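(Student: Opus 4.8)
The plan is to bound the number of guiding MECs of any fixed radius $r \in \R_v$ by a constant via a geometric argument, and then argue that the breadth-first search of Step~1 visits only $O(|\R_v|)$ nodes and edges in total. For the first part, I would fix $r \in \R_v$ and consider a guiding MEC $C$ of radius $r$. By Definition~\ref{def1}, $C$ overlaps $MEC_v$, and every MEC along the path from $v$ to the center of $C$ is no larger than $C$. Because the size of an MEC varies continuously and monotonically as we walk along an edge of $M$ (and the internal nodes are the branch/critical points), the ``no larger than $C$'' condition together with the overlap condition pins the center of $C$ down tightly: I would show that the centers of two distinct guiding MECs of the \emph{same} radius $r$ cannot lie on the same path emanating from $v$, because between them there would have to be a point on $M$ whose MEC is strictly larger than $r$ (a local peak), contradicting the monotonicity clause of the definition along that path. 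Hence each such center lies on a distinct branch of $M_v^*$ at the first branching encountered, and a packing / adjacency argument (the disks all have radius $r$, all overlap the fixed disk $MEC_v$, and their centers are ``extremal'' along monotone arcs) caps the count by an absolute constant.

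For the running-time claim, I would analyze the recursion in Step~1. The key observation is that the recursion along a path stops as soon as $MEC_{v'}$ fails to overlap $MEC_v$; and whenever it does \emph{not} stop, one of two things happens: either $MEC_{v'}$ has a radius consecutive (in $\R_v$) to that of its predecessor --- in which case a new element is added to $\S$ and we charge the visit to that element --- or we subdivide the segment $(v,v')$ by inserting dummy nodes, one per element of $\R_v$ lying strictly between the two radii, and each such dummy node is again charged to a distinct element of $\R_v$. Since each element of $\R_v$ is charged $O(1)$ times (by the first part, only a constant number of guiding MECs share a radius, and the dummy-node insertions along a single edge are in bijection with a contiguous block of $\R_v$ that is consumed exactly once), the total number of nodes and edges touched by the search is $O(|\R_v|)$, and each is processed in $O(1)$ time given the sorted array $\R_v$. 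Summing gives the $O(|\R_v|)$ bound.

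The main obstacle I anticipate is the same-radius bound: making rigorous the claim that two guiding MECs of equal radius $r$ cannot both satisfy the monotone-dominance condition along overlapping portions of $M^q$, and then converting ``they live on different branches'' into an honest \emph{constant}. The cleanest route is probably to show that a guiding MEC of radius $r$ whose center lies on edge/branch $e$ forces $r$ to be a local maximum of the MEC-size function restricted to the sub-path from $v$ along $e$ up to that center; distinct such local maxima of the \emph{same} value on a tree, all dominating back to the common root $v$, must be separated by strictly larger values, so there can be at most one per maximal monotone arc out of $v$, and the number of arcs out of $v$ incident to $MEC_v$'s overlap region is bounded because the corresponding disks of radius $r$ all meet the fixed disk $MEC_v$ and are pairwise non-nested, which forces a bounded-degree packing around $v$. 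I would also need to double-check the degenerate subdivision case, where the center of a guiding MEC falls in the interior of an edge: there we insert it as a dummy node, and I must verify that this does not create a path with a hidden larger MEC between $v$ and the new node --- which follows because along a single edge the MEC radius is monotone, so no interior point of $(v,v')$ exceeds $\max(r_{v}, r_{v'})$.
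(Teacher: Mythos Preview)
Your plan for the constant bound does not go through, and the paper's argument is quite different from what you sketch.

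First, the tree-path claim is false. You assert that two guiding MECs of the same radius $r$ on the same root-to-leaf path of $M_v^*$ would force a point of radius strictly greater than $r$ between them. In fact the opposite can happen: the radius function along such a path can rise from $r_v$ to $r$ at $c_1$, dip below $r$, and rise back to $r$ at $c_2$, with no point exceeding $r$ anywhere on $[v,c_2]$; both $c_1$ and $c_2$ then satisfy Definition~\ref{def1}. Relatedly, your later remark that ``along a single edge the MEC radius is monotone'' is not true for simple polygons: the paper explicitly places valley points in the interiors of medial-axis edges ($\Phi \cap \N = \emptyset$), so an edge can contain a strict local minimum of the radius function.

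Second, even granting ``distinct branches,'' your packing step does not yield a constant. Disks of a common radius $r$ that all meet a fixed disk and are pairwise non-nested can still be arbitrarily many (e.g.\ rotate a radius-$r$ disk around a common boundary point). You need a bound on how many of these MECs can cover a single point, and nothing in your outline supplies one.

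The paper's proof bypasses the tree structure entirely. It observes that every $C \in \S_r$ lies inside a disk $\chi$ of radius $r_v + 2r \le 3r$ centered at $v$, and then proves a \emph{depth} bound: for any point $x$, at most four members of $\S_r$ can contain $x$. The depth bound comes from the geometry of MECs, not from $M$: if $C \in \S_r$ contains $x$ with center $c$, and $p_1,p_2$ are its contact points with $\partial P$, then $x$ and $c$ lie on opposite sides of the chord $[p_1,p_2]$ (otherwise one could enlarge $C$ toward $x$, contradicting $C \in \S_r$), which forces $\angle p_1 x p_2 \ge \pi/2$; these angles at $x$ are pairwise disjoint, so at most four such $C$ exist. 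Integrating the depth over $\chi$ gives $|\S_r|\cdot \pi r^2 \le 4\cdot \pi(3r)^2$, hence $|\S_r| \le 36$. This covering-depth-plus-area trick is the missing idea in your proposal.

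Your charging argument for the $O(|\R_v|)$ time is essentially what the paper does (BFS over $M_v^*$ plus $\sum_r |\S_r| = O(|\R_v|)$), and is fine once the constant bound on $|\S_r|$ is in hand.
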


\begin{figure}[t]
\vspace{-0.1in}
\begin{minipage}[c]{0.5\textwidth}
\begin{center} %
\includegraphics[height=2.4in]{./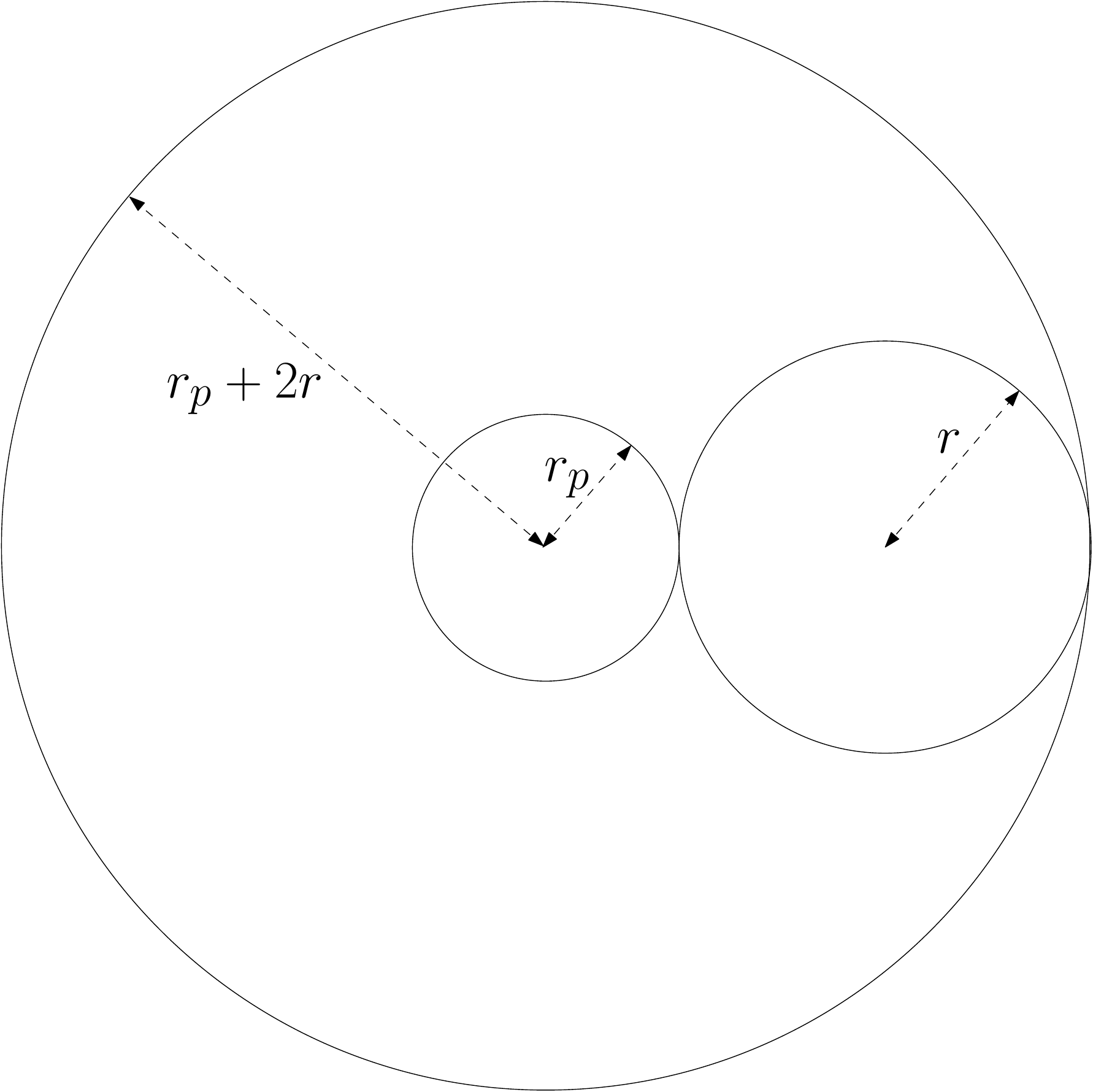}\\
(a)  
\end{center}
\end{minipage}% 
\begin{minipage}[c]{0.5\textwidth}
\begin{center} %
\includegraphics[height=2.4in]
    {./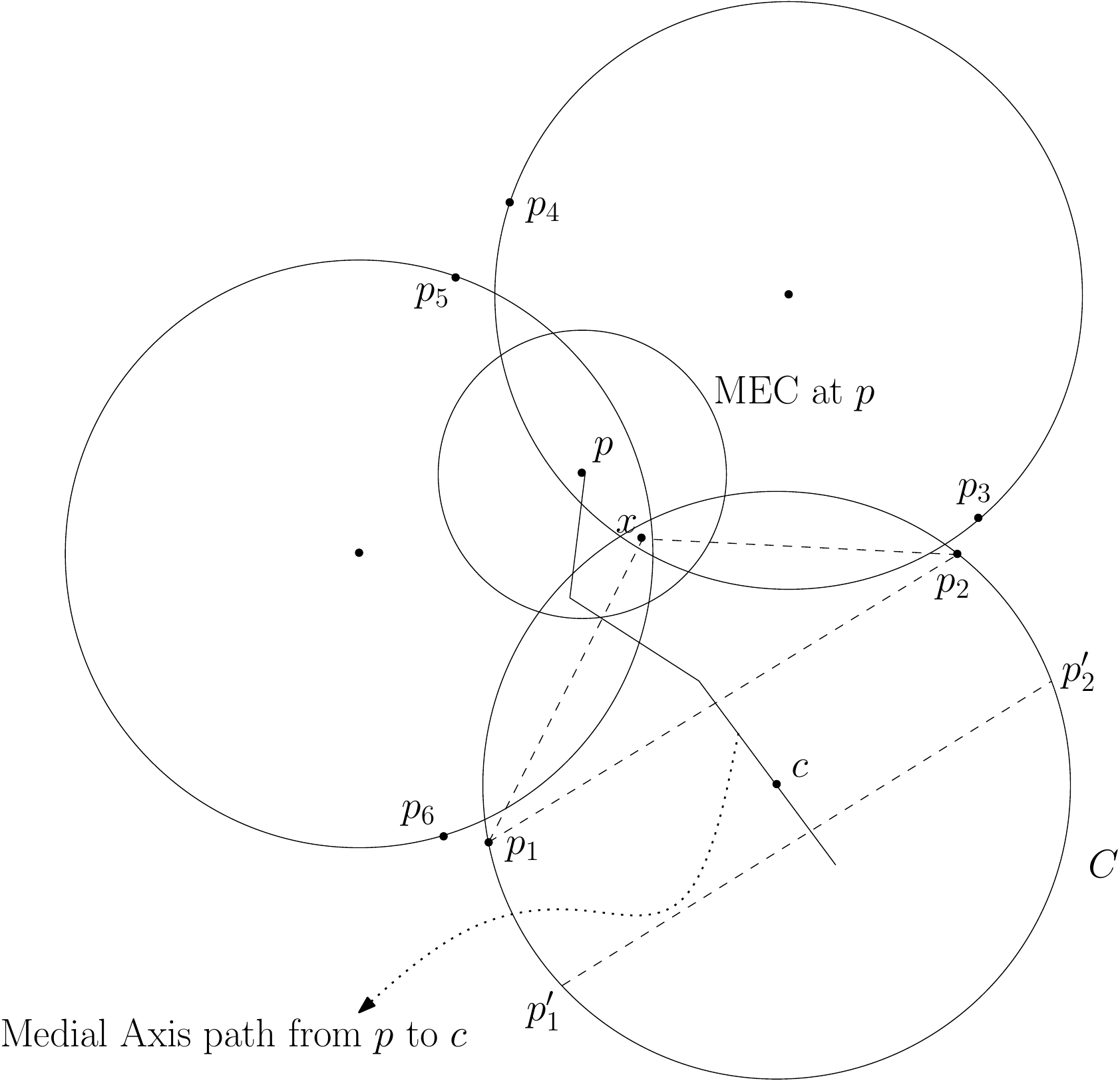}\\
(b)  
\end{center}
\end{minipage}
\vspace{-0.1in}
\caption{(a) Bounding $|\S|$, and (b) Illustration on the number of
MECs in $\S_r$ that enclose a point $x$}
\vspace{-0.1in}
\label{fig:BoundingS_r}
\end{figure}

\begin{proof}
Consider any $r \in \R_v$. Let $\S_r$ be the MECs of radius $r$ in 
$\S$. It suffices to show that $|\S_r|$ is bounded by a constant. 
For convenience, let us assume that $\S_r$ does not contain a MEC 
centered at a node of $M$. This will not affect us because we have 
assumed that MECs centered at nodes of $M$ have distinct radii, so 
at most one MEC in $\S_r$ can be centered at a node.

Let $r_v$ be the radius of $MEC_v$ of the root node $v$ of $M_v^*$.
Clearly, $r_v \le r$ (by Definition \ref{def1}). Also recall that
every MEC in $\S$ must (at least tangentially) intersect $MEC_v$.
See Figure \ref{fig:BoundingS_r}(a) for an illustration. Therefore,
every MEC in $\S_r$ must lie entirely within a circle $\chi$ of 
radius $r_v + 2r$ centered at $v$. Thus, we need to prove that the 
number of guiding circles of radius $r$ at node $v$ inside $\chi$ is
bounded by a constant.

Let us consider a point $x \in P$ . Let $\S_r^x \subseteq \S_r$ 
be a set of MECs that enclose $x$. Let $C$ be any MEC in $\S_r^x$ 
and $c$ be its center. Let $p_1$ and $p_2$ be the two points at 
which $C$ touches the boundary of the polygon $P$. The chord 
$[p_1,p_2]$ must intersect the medial axis (see Figure 
\ref{fig:BoundingS_r}(b)). Note that, the points $p$ and $c$ lie 
in the two different sides of $[p_1,p_2]$. On the contrary, if $p$ 
and $c$ lie in the same side of $[p_1',p_2']$, where $p_1'$ and 
$p_2'$ are the points of contact of the said MEC and the polygon 
$P$, then we can increase the size of the MEC by moving its center 
$c$ towards $p$ along the medial axis (see Figure
\ref{fig:BoundingS_r}(b)). Thus, $C \not\in \S_r$. Thus, we have
$\angle p_1 x p_2 \ge \pi/2$. These angles subtended by the MECs in
$\S_r^x$ are disjoint implying that $|\S_r^x| \le 4$. In other words,
any point inside the circle $\chi$ can be enclosed by at most four
different circles from $\S_r$.  We need to compute $|\S_r|$. Let us
consider a function $f(x) =$ number of circles in $\S_r$ that overlaps
at the point $x$, $x \in \chi$. $f(x) \leq 4$ for all $x \in \chi$.
The total number of circles in $\S_r$ can be obtained as follows:

Total area of circles in $\S_r \leq   \int_{(x,y) \in \chi} f(x) ~dx
~dy \le 4 \pi(r_v + 2r)^2$. \\ 

Therefore, $|\S_r|  \leq  \frac{4 \pi(r_v + 2r)^2}{ \pi r^2} \leq
\frac{4 \pi (3r)^2}{ \pi r^2} = 36$.

Thus, the first part of the lemma is proved.

The time complexity follows from the fact that the breadth first
search in $M_v^*$ needs $O(|M_v^*|)$ time. The time for computing the
members in $\S$ is $\sum_{r \in \R_v} |\S_r| = O(|\R_v|)$ (by the 
first part of this lemma). \qed
\end{proof}

\subsubsection{Query algorithm in {\tt QiC}}
Given a query point $q$, we first traverse the tree $\T$ to identify 
a node $v$ such that $q \in MEC_v$. Note that, the MECs at the nodes 
of the subtree rooted at $v$ may contain $q$; but the MECs 
corresponding to all other nodes in $\T$ will not contain $q$ (see 
Lemma \ref{l2}). Let $\S$ be the guiding circles attached with node
$v$, $\rho \in \R_v$ be the radius of the largest guiding circle in
$\S$ that contains $q$, and $\S_q$ be a subset of $\S$ that has radius
$\rho$ and contains $q$. $\S_q$ can be obtained from the {\tt LCQ}
data structure attached with node $v$. By Lemma~\ref{lem:BoundingS},
we have $|S_q| \leq 36$. In order to report $C_q$, we need the
following: 
\begin{description}
\item[Step 1:] an algorithm to identify the mountain associated with
each circle in $\S_q$,
\item[Step 2:] to locate the largest MEC containing $q$ in each of
these mountains using the {\tt MiM} query algorithm,
\item[Step 3:] to report the largest one among the MECs' obtained in
Step 2 as $C_q$. 
\end{description}
We first devise an algorithm for Step 1. The necessary algorithm for
Step 2 is already available in Subsection \ref{subsec:MiM}. We then
prove the necessary result to ensure the statement stated in Step 3.
\\

{\bf Algorithm for Step 1:} 

Consider a path $\Pi$ from $v$ to a leaf of $M_v^*$, and observe the
size of the MECs'. Figure \ref{fig:CorrectnessLemma} demonstrates a
curve $f(t)$ where $t$ denotes the distance of a point from $v$ on 
the path $\Pi$, and $f(t)$ denotes the radius of the MEC centered 
at that point. The guiding circles along the path $\Pi$ correspond 
to a subsequence of vertices along that path whose corresponding 
MECs' are increasing in size. 

\begin{lemma}\label{junk}
The guiding circles along a path $\Pi$ from $v$ to a leaf of $M_v^*$ 
containing the query point $q$ appear consecutively along $\Pi$. 
\end{lemma}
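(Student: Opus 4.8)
The plan is to derive the statement directly from the connectedness of $M^q$ (Lemma~\ref{l1}). First I would fix notation for the path: parametrise $\Pi$ by arc length $t$ measured from $v$, writing $\pi(t)$ for the point of $\Pi$ at distance $t$, with $\pi(0)=v$, and let $t_1 < t_2 < \cdots < t_k$ be the (finitely many, by Lemma~\ref{lem:BoundingS}) positions at which the guiding circles $C_1,\dots,C_k$ along $\Pi$ are centred; by the prefix-maximum structure of the curve $f$ in Figure~\ref{fig:CorrectnessLemma}, the radii satisfy $f(t_1) < \cdots < f(t_k)$. The crucial translation is the observation that a guiding circle $C_j$, being an MEC, encloses $q$ if and only if its centre $\pi(t_j)$ lies in $M^q$ --- this is immediate from the definition of $M^q$ as the maximal subset of $M$ whose MECs enclose $q$, together with the fact that every MEC is centred on $M$.

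Next I would use the elementary fact that a connected subtree of a tree is geodesically convex: it contains the unique ambient path between any two of its points. Since $\Pi$ is a simple path in $M$ and $M^q$ is a connected subtree of $M$ by Lemma~\ref{l1}, the set $\{\,t : \pi(t)\in M^q\,\}$ is a subinterval $J$ of the parameter range. Hence $\{\,j : C_j \text{ encloses } q\,\} = \{\,j : t_j \in J\,\}$, and since $t_1<\cdots<t_k$ this is a block of consecutive indices, which is exactly the claim. It is worth adding the sharper remark that, because the node $v$ located by the search satisfies $q\in MEC_v$, we have $v=\pi(0)\in M^q$, so $J$ has the form $[0,b]$ and the enclosing guiding circles are in fact an \emph{initial} segment of $C_1,\dots,C_k$; only consecutiveness is needed downstream.

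Equivalently --- and this is probably the cleanest way to write it up --- I would argue by contradiction: if $C_i$ and $C_j$ with $i<j$ enclose $q$ but some $C_\ell$ with $i<\ell<j$ does not, then $\pi(t_i),\pi(t_j)\in M^q$ while $\pi(t_\ell)\notin M^q$; but $t_i<t_\ell<t_j$ forces $\pi(t_\ell)$ to lie on the unique path in $M$ joining $\pi(t_i)$ and $\pi(t_j)$, which is contained in the connected subtree $M^q$ --- a contradiction.

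I do not expect a genuine obstacle here: the whole content is Lemma~\ref{l1}, and what remains is a one-line tree-convexity argument. The only points requiring a little care are bookkeeping: confirming that ``guiding circle along $\Pi$'' unambiguously means an MEC centred at a point of $\Pi$ (including the dummy nodes that may be inserted on an edge of $\Pi$ during preprocessing), and disposing of the degenerate question of what ``enclose'' means at the two endpoints of $J$ --- both are covered by the genericity assumptions already in force (distinct MEC radii at internal nodes, finitely many peaks and valleys).
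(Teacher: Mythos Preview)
Your proposal is correct and follows exactly the same approach as the paper: the paper's entire proof is the single line ``Follows from the connectedness of $M^q$ (see Lemma~\ref{l1}),'' and your argument is precisely the spelled-out version of that---use Lemma~\ref{l1} to get that $M^q$ is a connected subtree, intersect it with the path $\Pi$, and conclude the enclosing guiding circles form a block. Your additional remarks (that the block is in fact an initial segment since $v\in M^q$, and the bookkeeping about dummy nodes) go slightly beyond what the paper records but are consistent with it.
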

\begin{proof}
Follows from the connectedness of $M^q$ (see Lemma \ref{l1}). \qed
\end{proof}
Consider the MECs' attached to the nodes in $\Pi$. Let $v'$ be such a 
node whose corresponding MEC is largest among those containing $q$.
Let $M'$ ($\in \M$) be the mountain in which $v'$ lies. Here two cases
need to be considered: (i) $v'$ is the peak of $M'$, and (ii) $v'$ is
not the peak of $M'$. In Case (i), we have already got the largest MEC
centered on the path $\Pi$ and containing $q$. In Case (ii), we need
to invoke {\tt MiM} query algorithm to find the largest MEC centered
on the mountain $M'$.\\

{\bf Correctness of {\tt QiC}}

\begin{lemma}\label{lem:correctness}
At least one of the circles in $\S_q$ is centered in the mountain in
which $C_q$ is centered. 
\end{lemma}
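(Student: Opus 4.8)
The goal is to show that among the guiding circles in $\S_q$ — the subset of $\S$ of maximum radius $\rho$ that contains $q$ — at least one is centered on the same mountain as $C_q$. The strategy is to locate where $C_q$ sits relative to $v$ on the subtree $M^q$, and then to exhibit a specific guiding circle of radius $\rho$ on the unique path from $v$ to the center of $C_q$ that both encloses $q$ and lies on the mountain containing the center of $C_q$.

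First I would set up the geometry. Let $c_q$ be the center of $C_q$. By Lemma~\ref{l1}, both $v$ and $c_q$ lie on the connected subtree $M^q$, so there is a unique path $\Pi$ in $M$ from $v$ to $c_q$, and every point on $\Pi$ has its MEC containing $q$ (again by connectedness of $M^q$). Extend $\Pi$ slightly to a path from $v$ to a leaf of $M_v^*$ that passes through $c_q$; parametrize it by distance $t$ from $v$ with radius profile $f(t)$ as in the discussion preceding Lemma~\ref{junk}. Since $C_q$ is the \emph{largest} MEC containing $q$ anywhere on $M^q$, and $c_q$ lies on $M_v^*$, we have $f(t_{c_q}) = \rho_q$ where $\rho_q$ is the radius of $C_q$; moreover $f$ restricted to $[0, t_{c_q}]$ attains its max at $t_{c_q}$ — otherwise a larger MEC on $\Pi$ would still contain $q$, contradicting maximality of $C_q$. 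Hence every MEC on the segment of $\Pi$ from $v$ to $c_q$ is no larger than $C_q$, and $C_q$ overlaps $MEC_v$ (they share $q$): so $C_q$ itself satisfies the three conditions of Definition~\ref{def1} and $\rho_q \in \R_v$, i.e. $C_q \in \S$.

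Next I would argue about $\rho$ versus $\rho_q$. Since $C_q \in \S$ and contains $q$, and $\rho$ is by definition the radius of the \emph{largest} guiding circle in $\S$ containing $q$, we get $\rho \ge \rho_q$. Now take the largest-radius guiding circle $C^\star \in \S_q$ (radius exactly $\rho$) and consider its center $c^\star$; by the preprocessing construction $C^\star$ was placed in $\S$ along some path of $M_v^*$ where the MEC radii are increasing up to $c^\star$, so $f$ is monotone nondecreasing from $v$ to $c^\star$ along that path. The key point is that $q \in C^\star$ and $q \in C_q$, combined with monotonicity and the connectedness of $M^q$, forces $c^\star$ to lie on the path $\Pi$ from $v$ to $c_q$: all points of $M^q$ between $v$ and any point whose MEC contains $q$ also have MECs containing $q$, and on a tree this pins $c^\star$ onto $\Pi$ (or onto a branch that reattaches — here I would use that $\rho \ge \rho_q$ together with the monotone increase along $\Pi$ up to $c_q$ to rule out $c^\star$ leaving $\Pi$ before $c_q$, so in fact $\rho = \rho_q$ and $c^\star$ lies between $v$ and $c_q$, hence on the subpath of $\Pi$ that includes the mountain of $c_q$). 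Then by Lemma~\ref{junk} the guiding circles of radius $\rho$ along $\Pi$ that contain $q$ occur consecutively, and the last of them is centered in the mountain containing $c_q$; that circle is the desired member of $\S_q$.

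The main obstacle is the tree-branching subtlety: $c^\star$ need not \emph{a priori} lie on the same root-to-leaf path as $c_q$, since $M_v^*$ is a tree and $\S_q$ may draw circles from several branches. The crux is to show that a radius-$\rho$ guiding circle containing $q$ cannot live on a branch disjoint from the $v$–$c_q$ path: if it did, its center and $c_q$ would be separated in $M^q$ by $v$, which is fine for connectedness, but then the monotone radius profiles on the two branches together with $\rho \ge \rho_q$ and the fact that $C_q$ is the global maximum over $M^q$ would be contradicted (the branch carrying $C^\star$ would have to exceed $\rho_q$ somewhere while still containing $q$). Pinning this down cleanly — and handling the degenerate case where $C^\star$ is centered exactly at $v$ or at a valley point shared by two mountains — is where the real work lies; everything else is bookkeeping with Definition~\ref{def1} and Lemmata~\ref{l1} and~\ref{junk}.
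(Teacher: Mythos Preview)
Your argument has a concrete gap: you assert that $C_q$ is itself a guiding circle, in particular that $\rho_q \in \R_v$. But $\R_v$ is, by definition, the finite set of radii of MECs centered at the \emph{internal nodes} of $M_v^*$, whereas the center $c_q$ of $C_q$ generically lies in the interior of a medial-axis edge. So $\rho_q \notin \R_v$ in general, $C_q \notin \S$, and your conclusion $\rho \ge \rho_q$ collapses; in fact one always has $\rho \le \rho_q$ (every guiding circle is an MEC, and $C_q$ is the largest MEC containing $q$), with strict inequality whenever $c_q$ is not a node.

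This error propagates. Your attempt to force $c^\star$ onto $\Pi$ and to conclude $\rho=\rho_q$ relies on $C_q\in\S$, and once that fails, nothing in your sketch pins $c^\star$ down; the ``main obstacle'' you flag is not merely technical but is precisely where the argument breaks. The paper avoids this by never claiming $C_q \in \S$. Instead it restricts attention to the path $\Pi$ from $v$ through $c_q$, takes $v'$ to be the center of the largest guiding circle along $\Pi$ that contains $q$, and then runs a case analysis on the shape of $f$ between $v'$ and $c_q$: if $f$ is monotone there, $v'$ and $c_q$ share a mountain and we are done; if not, there is a dip to some $\alpha$ with $f(\alpha) < f(v')$ followed by a point $\beta$ with $f(\beta) = f(v')$, and $MEC_\beta$ is another guiding circle of the same radius, containing $q$ (by connectedness of $M^q$), and centered past the valley---hence in the mountain of $c_q$. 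This valley-crossing step is the missing idea that your sketch tried to shortcut via the false identity $\rho = \rho_q$.
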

\begin{figure}[t]
\vspace{-0.2in}
\begin{center}
\includegraphics[width=0.5\columnwidth]{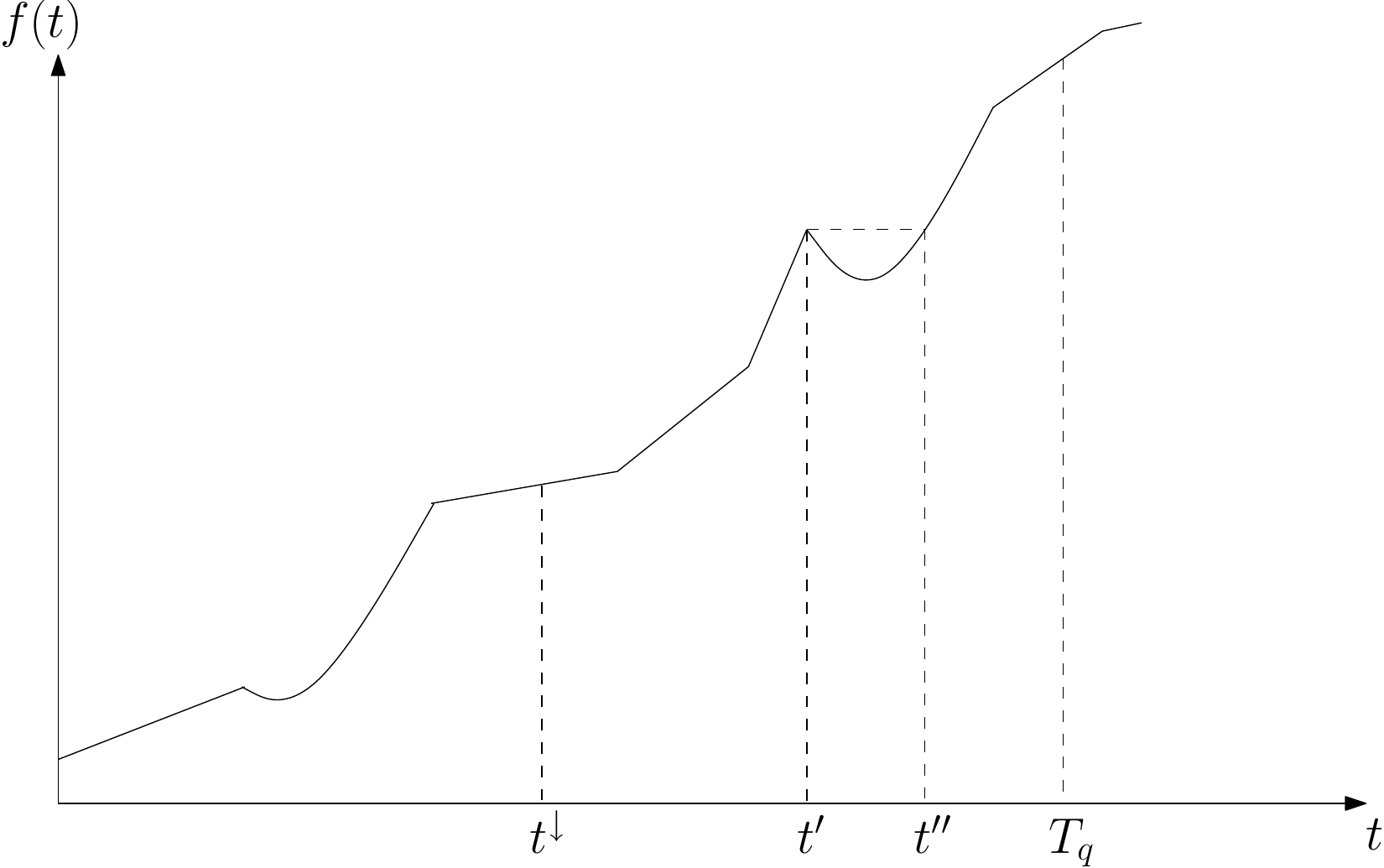}
\end{center}
\vspace{-0.1in}
\caption{Proof of Lemma \ref{lem:correctness}}
\label{fig:CorrectnessLemma}
\vspace{-0.2in}
\end{figure}

\begin{proof}
Since $M^q$ is a continuous subtree of $M$, if we explore all the
paths in $M_v^*$ from node $v$ towards its leaves, $c_q$ is reached
in one of such paths, say $\Pi$, and $v'$ be a node on $\Pi$ such 
that the guiding circle $MEC_{v'}$ is largest among those which 
contain $q$. Note that, any point on the path $\Pi$ closer to $v$ 
than $v'$ can not be the center of a larger MEC (see Definition 
\ref{def1}). Let the center $c_q$ of $C_q$ be a point on $\Pi$ that 
is in a different mountain to the right of $v'$. Here again two 
situations need to be considered: (i) the function $f(t)$ increases 
monotonically from $v'$ to $c_q$, and (ii) the function $f(t)$ from 
$v'$ to $c_q$ is not monotonic. In Case (i) $v'$ and $c_q$ lie in 
the same mountain. In Case (ii), between $v'$ and $c_q$ there is a 
point $\alpha$ on the path $\Pi$, such that the radius of 
$MEC_\alpha$ is less than that of $MEC_{v'}$. Also, there exists 
another point $\beta$ on the path $\Pi$ between $\alpha$ and $c_q$ 
such that the radius of $MEC_\beta$ is equal to that of $MEC_{v'}$. 
Since the radius of $MEC_\beta$ matches with an element of $\R$, 
$MEC_\beta$ is also a guiding circle. Moreover, from the continuity 
of $M^q$, the MEC centered at $\beta$ must contain $q$. So, if $c_q$ 
does not lie in the mountain of $v'$, it must lie in the mountain  
containing $\beta$. Thus, the lemma follows. \qed
\end{proof}

\begin{lemma}\label{lem:QiC}
The preprocessing time and space complexities for the {\tt QiC} query
are $O(|M| \log^2 |M|)$ and  $O(|M| \log |M|)$ respectively. Queries
can be answered in  $O(\log^2 |M|)$ time.
\end{lemma}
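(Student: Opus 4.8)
The plan is to view the {\tt QiC} data structure as the collection, over all nodes $v$ of the centroid tree $\T$, of the sorted radius array $\R_v$, the set $\S_v$ of guiding circles of $v$ with each circle tagged by its mountain, and an {\tt LCQ} structure built on $\S_v$. I would bound the work done at a single node $v$ in terms of $|M_v^*|$, then sum over $\T$ using two structural facts: the components $M_v^*$ occurring at a fixed level of $\T$ are pairwise vertex-disjoint, and $\T$ has depth $O(\log|M|)$.

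For the per-node costs I would invoke the earlier results directly. By Lemma~\ref{lem:BoundingS}, the breadth-first scan of $M_v^*$ produces $\S_v$, and the $O(|\R_v|)$ dummy nodes it places on edges of $M$, in $O(|M_v^*|)$ time, and moreover $|\S_v| = O(|\R_v|) = O(|M_v^*|)$ because each radius of $\R_v$ contributes only $O(1)$ guiding circles. The array $\R_v$ can be delivered already sorted in $O(|M_v^*|)$ time if the internal nodes of $M$ are sorted once, globally, by MEC radius in $O(|M|\log|M|)$ total time and each node is then appended to the lists of its $O(\log|M|)$ ancestors in $\T$. Finally, by Theorem~\ref{lem:largest-circle} the {\tt LCQ} on $\S_v$ is built in $O(|\S_v|\log^2|\S_v|)$ time and $O(|\S_v|\log|\S_v|)$ space and is queried in $O(\log^2|\S_v|)$ time. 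Summing the scan and sorting contributions over $\T$ is immediate from level-disjointness; summing the {\tt LCQ}-construction contribution is the point that needs care, since the guiding circles over all of $\T$ total $\Theta(|M|\log|M|)$.

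The query algorithm is short. Given $q$, descend the root-to-leaf path of $\T$, testing $q\in MEC_v$ in $O(1)$ time at each node, and stop at the first $v$ for which the test succeeds; this costs $O(\log|M|)$. Query the {\tt LCQ} associated with that level to obtain $\rho\in\R_v$, the radius of the largest guiding circle containing $q$, and the set $\S_q$ of guiding circles of radius $\rho$ containing $q$; by Lemma~\ref{lem:BoundingS} we have $|\S_q|\le 36$. For each circle in $\S_q$, look up the stored mountain and run the {\tt MiM} query of Subsection~\ref{subsec:MiM} on it in $O(\log|M|)$ time, and return the largest MEC found. Correctness is exactly Lemma~\ref{lem:correctness} (some member of $\S_q$ lies on the mountain that carries $c_q$) together with the Step-1 case analysis; the total query time is $O(\log^2|M|)$, dominated by the single {\tt LCQ} query.

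The main obstacle is the preprocessing accounting. A literal ``one {\tt LCQ} per node of $\T$'' scheme pays $\sum_{v\in\T} O(|\S_v|\log^2|\S_v|)$ for construction, and since $\sum_{v\in\T}|\S_v|=\Theta(|M|\log|M|)$ this carries a logarithmic factor more than the stated bound; the same happens for space. The route I would pursue to remove it is to observe, from Lemmata~\ref{l1} and~\ref{l2}, that for any query point the subtree $M^q$ is connected and, at each level of the descent, sits entirely inside the single component $M_v^*$ reached there, so a guiding circle attached to any other component at that level has its centre outside $M^q$ and cannot contain the query point. Hence one can keep a single {\tt LCQ} per level, on the disjoint union $\bigcup_{v\text{ at level }\ell}\S_v$ of total size $O(|M|)$, and query it in place of the individual node structures; one then checks that, after this folding, the construction and space totals come out as $O(|M|\log^2|M|)$ and $O(|M|\log|M|)$, with the dummy nodes not affecting the size bounds since only $O(|\R_v|)$ of them are ever inserted per component. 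Verifying that this folding really buys the full logarithmic factor in both resources --- rather than just one of them --- is the delicate step I would spend most effort on.
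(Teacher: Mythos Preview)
Your per-node analysis is correct and matches the paper exactly: at a fixed node $v$ one sorts $\R_v$ in $O(|M_v^*|\log|M_v^*|)$ time, the breadth-first scan of $M_v^*$ produces $\S$ with $|\S|=O(|\R_v|)$ by Lemma~\ref{lem:BoundingS}, and Theorem~\ref{lem:largest-circle} then gives the {\tt LCQ} in $O(|\R_v|\log^2|\R_v|)$ time and $O(|\R_v|\log|\R_v|)$ space with $O(\log^2|M_v^*|)$ query time; the query finishes with $O(1)$ {\tt MiM} calls by Lemma~\ref{lem:correctness}. This is precisely the paper's argument.

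Where you diverge is in reading the lemma as a \emph{total} over all nodes of $\T$ and then engineering a per-level folding of the {\tt LCQ} structures to shave a logarithm. The paper does not do this: its proof is purely a single-instance bound, with $|M|$ standing in as an upper bound for $|M_v^*|$ and $|\R_v|$. The extra $\log|M|$ factor you worry about from summing $\sum_{v\in\T}|\S_v|=\Theta(|M|\log|M|)$ is not eliminated in the paper --- it is simply pushed into Theorem~\ref{thm:SimplePolygon}, whose preprocessing bound is $O(n\log^3 n)$ time and $O(n\log^2 n)$ space, i.e., one logarithm more than the per-node figures of Lemma~\ref{lem:QiC}. (The sentence in Section~4.3 calling the Lemma~\ref{lem:QiC} bound a ``total'' is loose; the arithmetic in Theorem~\ref{thm:SimplePolygon} makes clear that the per-level disjointness and $O(\log n)$ depth are what actually get summed.)

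Your folding idea --- one {\tt LCQ} per level of $\T$ on the disjoint union $\bigcup_{v\text{ at level }\ell}\S_v$, justified by the fact that $M^q$ lies entirely inside the unique component reached at each level so guiding circles from other components cannot contain $q$ --- is a genuinely different device. It is plausible and, if it goes through, would improve Theorem~\ref{thm:SimplePolygon} by a $\log n$ factor. But it is not needed for Lemma~\ref{lem:QiC} as the paper proves and uses it, and the ``delicate step'' you flag is work beyond what the paper undertakes.
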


\begin{proof} Compuing $\S$ requires $O(|M_v^*| \log |M_v^*|)$ time
because we need to sort the elements in $R_v$. The members in $\S$ can
be stored in {\tt LCQ} data structure in $O(|\R_v| \log^2 |\R_v|)$
time and $O(|\R_v| \log |\R_v|)$ space (see Theorem
\ref{lem:largest-circle}) and queries can be answered in $\log^
2 |M_v^*|$.

In the query phase with a query point $q$, we identify a constant
number of guiding circles $\S_q$ attached to node $v$ that contains
$q$. Next, we call {\tt MiM} queries in their associated mountains;
this takes $\log |M_v^*|$ time (see Theorem \ref{th-QMEC-convex}). 
By Lemma~\ref{lem:correctness}, the result of one of the {\tt 
MiM} queries will be the largest MEC with center on $M_v^*$ that 
contains $q$. Thus the query time complexity follows. \qed
\end{proof}

\subsection{Query algorithm for finding $C_q$}
Before we start the divide and conquer, we compute the set $\M$ of
mountains and preprocess each of them for {\tt MiM} query. Since $\M$ 
is a partition of the medial axis $M$, all the mountains can be
preprocessed for {\tt MiM} query in $O(|M|)$ time. In Lemma
\ref{lem:QiC}, it is shown that the total preprocessing time needed
for the {\tt QiC} queries at every node of $\T$  is $O(|M| \log^2
|M|)$ using $O(|M| \log |M|)$ space. 

In the query phase, we start at the root level of $\T$ and check if
$q$ falls inside the MEC centered at the root. If yes, we find $C_q$
using the query algorithm for {\tt QiC}. This takes $O(\log^2 n)$ 
time (see Lemma \ref{lem:QiC}). Otherwise, we proceed in the 
appropriate subtree of the root whose corresponding sub-polygon 
contains $q$. In order to choose this sub-polygon, we need another 
data structure as stated below. 
\begin{figure}
\vspace{-0.1in}
\centering
\includegraphics[width=0.6\columnwidth]{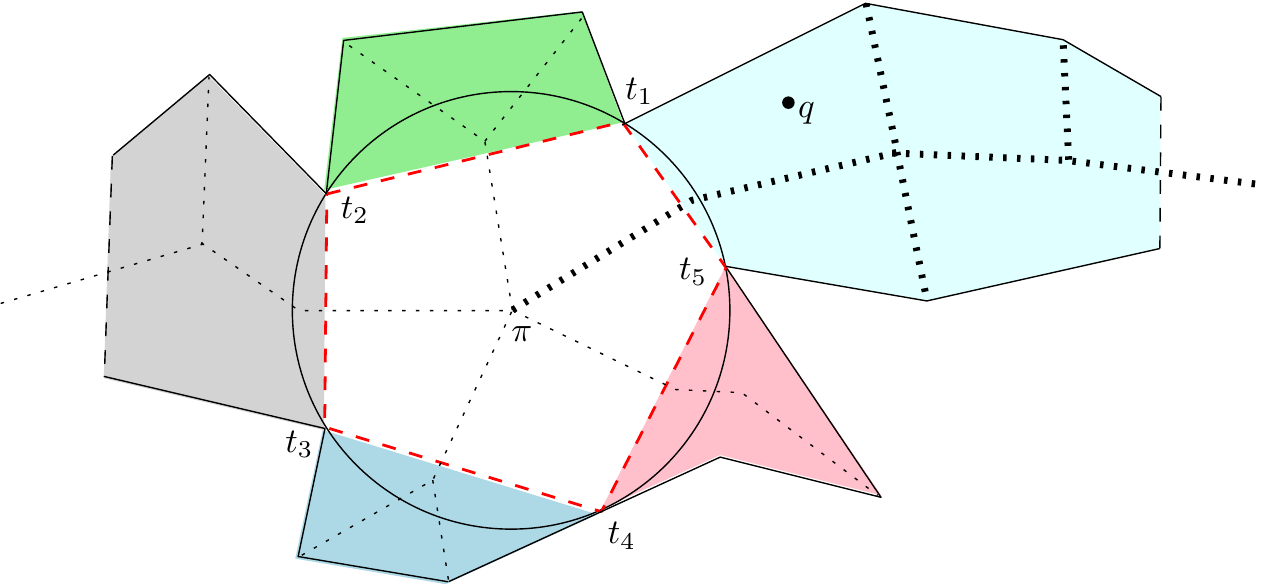}
\caption{The divide and conquer search structure}
\vspace{-0.2in}
\label{recursion}
\end{figure}

Recall that, the medial axis $M$ partitions $P$ into $n$ cells.
Let $\pi$ be the centroid node that corresponds to the root of $\T$.
Let $\pi$ have $k$ children. In other words, if we consider the
$MEC_\pi$, it touches $P$ at $k$ different points. This gives birth 
to $k$ sub-polygons (as illustrated in Figure \ref{recursion} with
$k=5$). The centroid of each sub-polygon is a child of $\pi$. We
attach a {\it first-level-tag} $i$ with each cell in the $i$-th
sub-polygon, for $i = 1,2,\ldots, k$. Next, we consider the children
of $\pi$ (the nodes in the second level of $\T$) in a breadth first
manner. For each sub-polygon, consider its centroid. The MEC of that
node again partition that sub-polygon into further parts. We attach a
{\it second-level-tag} to each cell of that sub-polygon as we did for
the root. After considering all the children of $\pi$, we go to the
third level, and do the same for attaching the {\it third-level-tag}
to the partitions of $P$. Since the number of levels of $\T$ is
$O(\log n)$ in the worst case, a cell of $P$ may get $O(\log n)$ tags.
Thus each cell is attached with an array {\it TAG} of size $O(\log n)$
containing tags of $O(\log n)$ levels. This needs $O(n\log n)$ time
and space in the worst case.

By point location in the planar subdivision of $P$, we know in which 
partition $Q$ of $P$ the query point $q$ lies. While searching in the 
tree $\T$, if $q$ does not lie in $MEC_u$ of a node $u$ in the $i$-th
level, we choose the appropriate subtree of $u$ by observing the
$i$-th entry of the array {\it TAG} attached to the partition $Q$, 
and proceed in that direction. Thus, the overall query time
complexity includes (i) $O(\log n)$ for the point location in the
subdivision of $P$, (ii) $O(\log n)$ time for traversal in $\T$, (iii)
$O(\log^2 n+K\log n)$ time for identifying the largest guiding circles
attached to node $v$ (in its $LCQ$ data structure) if node $v$ is
observed first during traversal of $\T$, such that $MEC_v$ contains
$q$, and the {\tt MiM} queries for $K$ mountains if $K$ circles are
output of step (iii). In Lemma \ref{lem:BoundingS}, it is proved that
$K$ is bounded by a constant. Thus, we have the following theorem.

\begin{theorem} \label{thm:SimplePolygon}
A simple polygon can be preprocessed in $O(n \log^3 n)$ time using 
$(n \log^2 n)$ space and the QMEC queries can be answered in 
$O(\log^2 n)$ time.
\end{theorem}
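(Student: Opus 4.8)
The plan is to assemble Theorem~\ref{thm:SimplePolygon} by tallying the preprocessing costs of the three ingredients already developed --- the mountain partition with per-mountain {\tt MiM} structures, the {\tt QiC} structures attached to the nodes of $\T$, and the {\tt TAG} arrays that guide the descent in $\T$ --- and then separately tallying the query cost of one root-to-leaf walk in $\T$ together with the single {\tt QiC}/{\tt MiM} invocation it triggers. Throughout I write $n$ for the number of vertices of $P$ and note $|M| = O(n)$.

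\medskip

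First I would handle the one-time preprocessing that is not indexed by nodes of $\T$: computing the medial axis $M$ in $O(n)$ time, cutting it at valley points to obtain $\M$, and preprocessing every mountain $M_i$ for {\tt MiM} queries. By Subsection~\ref{subsec:MiM} each $M_i$ costs $O(|M_i|)$ time and space, and since $\sum_i |M_i| = O(|M|) = O(n)$ (each edge of $M$ lies in a bounded number of mountains), the total here is $O(n)$. Next I would account for the {\tt QiC} structures. The tree $\T$ is the centroid-decomposition tree of $M$ (Lemma~\ref{Jordon}), so it has $O(\log n)$ levels and the subtrees $M_v^*$ at a fixed level are disjoint and sum to $O(|M|)$ nodes. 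By Lemma~\ref{lem:QiC}, the {\tt QiC} structure at node $v$ costs $O(|M_v^*| \log^2 |M_v^*|)$ time and $O(|M_v^*| \log |M_v^*|)$ space; summing over one level gives $O(n \log^2 n)$ time and $O(n \log n)$ space, and over all $O(\log n)$ levels gives $O(n \log^3 n)$ preprocessing time and $O(n \log^2 n)$ space. Finally the {\tt TAG} arrays: each cell of the planar subdivision of $P$ receives one tag per level of $\T$, i.e. $O(\log n)$ tags, so this contributes $O(n \log n)$ time and space, and the point-location structure on the $n$-cell subdivision of $P$ is $O(n)$ \cite{Krik83}. Adding everything, preprocessing is dominated by the {\tt QiC} term: $O(n \log^3 n)$ time and $O(n \log^2 n)$ space.

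\medskip

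For the query bound with a point $q$, I would first spend $O(\log n)$ on point location in the subdivision of $P$ to learn which cell $Q$ contains $q$ (and hence which mountain, for the fallback Case~(ii)). Then I walk down $\T$: at a node $u$ in level $i$ I test $q \in MEC_u$ in $O(1)$; if it fails I read the $i$-th entry of $Q$'s {\tt TAG} array in $O(1)$ and recurse into the indicated child (justified by Lemma~\ref{l2}, since $M^q$ is connected by Lemma~\ref{l1} and hence confined to one subtree once $q$ escapes $MEC_\pi$). This walk costs $O(\log n)$ total. When the first node $v$ with $q \in MEC_v$ is reached, I run the {\tt QiC} query at $v$, which is $O(\log^2 |M_v^*|) = O(\log^2 n)$ by Lemma~\ref{lem:QiC} (this already folds in querying the {\tt LCQ} structure for the $O(1)$-many largest guiding circles $\S_q$ of Lemma~\ref{lem:BoundingS}, plus the constant number of {\tt MiM} calls). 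If no such $v$ is ever found, $q$ lies in no vertex-MEC, and Case~(ii) applies: identify the mountain $M_i$ containing $q$ from the point location already done, and run the convex-polygon-style {\tt MiM} query in $M_i$, which is $O(\log n)$ by Theorem~\ref{th-QMEC-convex}. Correctness of the reported MEC in Case~(i) is exactly Lemma~\ref{lem:correctness} together with Step~3 of the {\tt QiC} query algorithm. Summing, the query time is $O(\log n) + O(\log^2 n) = O(\log^2 n)$.

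\medskip

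I expect the routine obstacle here to be bookkeeping rather than mathematics --- in particular, making the claim ``$\sum_{\text{level}} |M_v^*| = O(|M|)$'' airtight, since the centroid decomposition must be stated so that the subtrees at each level genuinely partition (or boundedly overlap at) the node set of $M$, and one must be careful that dummy nodes inserted on edges during {\tt QiC} preprocessing (Step~1(b) of the preprocessing in Subsection~\ref{subsec:QiC}) do not inflate $|M|$ beyond $O(n)$; by Lemma~\ref{lem:BoundingS} only $O(|\R_v|) = O(|M_v^*|)$ such nodes are created per {\tt QiC}, which keeps the per-level total linear. The only genuinely substantive point --- that a single {\tt QiC}/{\tt MiM} probe at the discovered node $v$ suffices to find the global $C_q$ --- is already discharged by Lemmas~\ref{l1}, \ref{l2}, \ref{lem:correctness}, so the proof of the theorem is essentially an exercise in adding the established bounds.
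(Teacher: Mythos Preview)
Your proposal is correct and follows essentially the same approach as the paper: tallying the {\tt MiM}, {\tt QiC}, and {\tt TAG} preprocessing costs and then the cost of one descent in $\T$ plus a single {\tt QiC}/{\tt MiM} invocation. Your explicit level-by-level summation over the centroid decomposition (using $\sum_{v \text{ at a level}} |M_v^*| = O(|M|)$ and multiplying by $O(\log n)$ levels) is in fact cleaner than the paper's own exposition, which states the per-node {\tt QiC} cost in Lemma~\ref{lem:QiC} but is less explicit about how the extra $\log n$ factor arises when summing over all of $\T$.
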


\section{QMEC for Point Set}
\label{sec:qmec}
The input consists of a set of points $P=\{p_1, p_2, \ldots, p_n\}$ 
in $\mathbb{R}^2$. The objective is to preprocess $P$ such that 
given any arbitrary query point $q$, the largest circle $C_q$ that
does not contain any point in $P$ but contains $q$, can be reported 
efficiently. Observe that, if $q$ lies outside or on the boundary 
of the convex hull of $P$, we can draw a circle of infinite radius 
passing through $q$. So, we shall consider the case where $q$ 
lies in the proper interior of the convex hull of $P$. 

An MEC centered at a {\it Voronoi vertex} touches at least three
points from $P$. We assume that the MECs centered at {\it Voronoi
vertices} are of distinct sizes. For our purpose, we also compute 
some {\em artificial vertices}, one on each Voronoi edge that is 
a half line. We must compute these artificial vertices carefully 
to ensure that the following conditions hold. 
\begin{enumerate}
\item Every MEC centered at an artificial vertex must be larger than 
MECs centered at Voronoi vertices, and
\item the MECs centered at artificial vertices should not overlap
pairwise within the convex hull of $P$. Surely, they overlap outside
the convex hull of $P$.
\end{enumerate}
The second condition ensures that there exists no query point $q$
which can be enclosed by more than one MEC centered at artificial
vertices. This second condition makes the choice of artificial
vertices somewhat tricky, but it is a simple exercise to see that we
can choose the artificial vertices in $O(n^2)$ time. We use the
unqualified term {\em vertex} to refer either to a  Voronoi vertex or
an artificial vertex.

Now, consider the planar graph with both the Voronoi vertices and
the artificial vertices. Let $v$ be a Voronoi vertex and let
$MEC_v$ be the MEC centered at $v$. A path $\P_v = (v^1=v, v^2,
\ldots, v^k)$ from $v$ in the graph is said to be a rising path with
respect to $v$ if 

\begin{itemize}
\item[$\bullet$] MECs centered at vertices other than $v^1$ and $v^k$
are strictly smaller than $MEC_v$, and
\item[$\bullet$] $MEC_{v^k}$ is strictly larger than $MEC_v$. Note
that $v^k$ may be an artificial vertex, but the other vertices in the
path $\P_v$ are surely Voronoi vertices.
\end{itemize}

The last edge $(v^{k-1}, v^k)$ is called a {\em rising edge} with
respect to the vertex $v$. Since $MEC_{v^{k-1}}$ is smaller than
$MEC_v$, but $MEC_{v^k}$ is larger than $MEC_v$, there is exactly one
MEC centered on the edge $(v^{k-1},v^k)$ that equals the size of
$MEC_v$. Let us denote this MEC by $MEC_{\P_v}$. If $MEC_{\P_v}$
overlaps $MEC_v$, then the rising edge $(v^{k-1}, v^k)$ is called an
{\em overlapping edge} with respect to vertex $v$. Let $\O_v$ be the
set of overlapping edges with respect to $v$. Observe that $\O_v$, for
a given vertex $v$, can be computed in $O(n)$ time via a breadth first
search from $v$. The preprocessing and query procedures are given in
Procedures~\ref{alg:preprocessing} and \ref{alg:query}.

\floatname{algorithm}{Procedure}
\begin{algorithm}[h!]
\caption{Preprocessing steps}
\label{alg:preprocessing}
\begin{algorithmic}[1]

\STATE{\bf INPUT:} Set of points $P$ in $\mathbb{R}^2$.
\STATE Compute the MECs centered at (both Voronoi and artificial) 
vertices, and store them in {\tt LCQ} data structure. 
\STATE For each vertex $v$, compute $\O_v$ using a breadth first
search. \label{lno:bfs}

\end{algorithmic}
\end{algorithm}

\begin{algorithm}[h!]
\caption{Query steps}
\label{alg:query}
\begin{algorithmic}[1]

\STATE{\bf INPUT:} a query point $q$ along with the {\tt LCQ} data
structure containing MECs centered at vertices, and $\O_v$ for every
internal vertex $v$ in the the Voronoi diagram of $P$. 

\STATE Find the largest MEC $C^q$ in the {\tt LCQ} data structure
containing $q$. Let $v^q$ be the center of $C^q$. 

\IF{$v^q$ is an artificial vertex}

\STATE Report $C_q=C^q$ as the largest circle containing $q$ that is
centered on the edge containing $v^q$. 

\STATE Exit

\ENDIF

\STATE $C \leftarrow C^q$

\FORALL{edges $e \in \O_{v^q}$} \label{lno:for}

\IF{MECs centered on $e$ do not enclose $q$}
\STATE Continue to next edge in $O_{v^q}$
\ENDIF

\STATE Let $C^e$  be the largest circle centered on $e$ that encloses
$q$. \label{lno:update} 

\STATE $C \leftarrow \max(C, C^e)$

%\IF{$C$ is smaller than $C^e$} 
%
%\STATE $C \leftarrow C^e$
%
%\ENDIF

\ENDFOR

\STATE Report $C$
\end{algorithmic}
\vspace{-0.1in}
\end{algorithm}

We are now left with showing that Procedures~\ref{alg:preprocessing}
and \ref{alg:query} are correct and bound their complexities. We
address the latter first. We begin with a lemma. which can be proved
essentially using the proof of  Lemma~\ref{lem:BoundingS}.  

\begin{lemma} \label{lem:BoundingS2}
For any internal vertex $v$ in the Voronoi diagram of $P$, $|\O_v|$ is
bounded by a constant. 
\end{lemma}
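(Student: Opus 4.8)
The statement to establish is Lemma~\ref{lem:BoundingS2}: for any internal Voronoi vertex $v$, the set $\O_v$ of overlapping edges with respect to $v$ has constant size. The plan is to mirror the argument of Lemma~\ref{lem:BoundingS} almost verbatim, observing that ``overlapping edge with respect to $v$'' plays the role of ``guiding circle of radius $r_v$ at node $v$'' in the simple-polygon setting. First I would recall the relevant definitions: an edge $e=(v^{k-1},v^k)$ is in $\O_v$ precisely when it is the last edge of a rising path from $v$, and the unique MEC $MEC_{\P_v}$ centered on $e$ whose radius equals $r_v$ (the radius of $MEC_v$) overlaps $MEC_v$. So each element of $\O_v$ contributes a distinct MEC of radius exactly $r_v$ that intersects $MEC_v$ at least tangentially. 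It therefore suffices to bound the number of such radius-$r_v$ MECs.

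\textbf{Key steps.} The core geometric fact to reprove in the Voronoi setting is: if $C$ is an empty circle of radius $r_v$ centered at a point $c$ on a Voronoi edge, and $p_1,p_2$ are two points of $P$ on the boundary of $C$ that determine that Voronoi edge, then the center $c$ and the ``growth direction'' along the edge lie on opposite sides of the chord $[p_1,p_2]$ --- otherwise one could slide $c$ along the Voronoi edge to enlarge the empty circle, contradicting that $MEC_{\P_v}$ sits on a \emph{rising} edge (along which the MEC radius is monotone). Consequently, for any point $x$ enclosed by such an MEC $C$, the angle $\angle p_1 x p_2 \ge \pi/2$, because $x$ lies on the same side of $[p_1,p_2]$ as $c$ (indeed $x$ is inside $C$ and the chord cuts $C$ into a region containing $c$ and one not), so it lies in the minor arc's region; a standard inscribed-angle argument then gives a right or obtuse angle at $x$. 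Applying this with $x$ taken to be $v$ itself (which is enclosed by every such $C$, since each edge in $\O_v$ is overlapping, meaning $MEC_{\P_v}$ overlaps $MEC_v$, and in fact one checks $v \in MEC_{\P_v}$): the cones of half-angle $\ge \pi/4$ subtended at $v$ by the disjoint chords are pairwise non-overlapping in angular measure, so at most four such MECs can exist. Hence $|\O_v| \le 4$, a constant. Alternatively, I would reuse the area-counting argument of Lemma~\ref{lem:BoundingS}: all these equal-radius MECs lie within a disk of radius $3r_v$ about $v$, each point is covered at most a constant number of times, so their count is at most $\frac{4\pi(3r_v)^2}{\pi r_v^2}=36$.

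\textbf{Main obstacle.} The delicate point is verifying that the ``opposite-side of the chord'' argument from Lemma~\ref{lem:BoundingS} transfers correctly, since there the medial axis of a simple polygon was used and here we have the Voronoi diagram of a point set; the roles are symmetric (MECs of a point set are the ``largest empty circles'' exactly as MECs inside a polygon are), but I must confirm that along a rising edge the empty-circle radius genuinely varies monotonically and that the contact points $p_1,p_2$ stay fixed along that single Voronoi edge --- both of which are immediate from the definition of the Voronoi diagram (a Voronoi edge is equidistant from a fixed pair of sites, and the distance to that pair is a strictly convex function along the edge with a unique minimum at the edge's ``closest point''). The second subtlety is checking $v\in MEC_{\P_v}$ for overlapping edges so that I may legitimately take $x=v$; this follows because $MEC_{\P_v}$ has the same radius $r_v$ as $MEC_v$ and the two overlap, but one should note the ``overlap'' hypothesis gives intersection of the disks, not containment of the center --- so the cleanest route is in fact the area-counting argument, which needs only that all the radius-$r_v$ MECs intersect $MEC_v$ and hence are confined to the radius-$3r_v$ disk, avoiding any appeal to $v$ lying inside them. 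I would therefore present the area argument as the main proof and remark that it is the exact analogue of Lemma~\ref{lem:BoundingS}.
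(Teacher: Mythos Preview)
Your proposal is correct and follows essentially the same route as the paper: associate to each overlapping edge the equal-radius MEC $MEC_{v'}$ that overlaps $MEC_v$, verify that the chord $[p_1,p_2]$ of its two contact points lies on the $v^{k-1}$-side of $v'$ (the paper phrases this as ``otherwise $MEC_{v'}$ would not be the first MEC along the edge equal in size to $MEC_v$''), and then invoke the area-counting argument of Lemma~\ref{lem:BoundingS} verbatim to obtain the constant bound. Your caution about not taking $x=v$ directly and instead relying on the area argument is well placed and matches what the paper actually does.
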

\begin{proof}
Let $MEC_v$ be the circle centered on $v$. Consider any overlapping
edge $e=(v_1, v_2)$ in $\O_v$. Assume without loss of generality that
the MEC at $v_2$ is larger than the MEC at $v_1$. By definition, there
is a  point $v'$ on $e$ such that $MEC_{v'}$ has the same radius as
$MEC_v$ and that $MEC_{v'}$ intersects $MEC_v$. Let $p_1$ and $p_2$ be
the two points in $P$ that touch the MEC at $v'$. The chord $p_1 p_2$
intersects the edge $e$ somewhere between $v_1$ and $v'$ (see Figure
\ref{fig:BoundingS_r}). Otherwise, the MEC at $v'$ will not be the
first MEC from $v_1$ to $v_2$ that equals $MEC_v$ in size. Therefore,
we can use the same idea from Lemma~\ref{lem:BoundingS} to bound the
number of overlapping edges. \qed
\end{proof}

\begin{lemma}\label{lem:PointSetBounds}
Given that we can construct the {\tt LCQ} data structure for $n$
circles in $O(p(n))$ preprocessing time with a space complexity of
$O(s(n))$ and queries answered in $O(q(n))$ time, Procedure
\ref{alg:preprocessing} (for preprocessing) takes $O(p(n) +n^2)$ time
and $O(s(n) + n)$ space, and Procedure~\ref{alg:query} (for query
answering) takes $O(q(n))$ time. 
\end{lemma}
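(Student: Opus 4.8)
The plan is to verify each of the three complexity claims in Lemma~\ref{lem:PointSetBounds} by walking through Procedures~\ref{alg:preprocessing} and~\ref{alg:query} line by line, charging each line to one of the known building blocks. For the preprocessing bound, line~2 computes the Voronoi diagram of $P$ (which costs $O(n\log n)$), extracts the $O(n)$ Voronoi vertices, and computes the $O(n)$ artificial vertices; the text already asserts that the artificial vertices can be chosen in $O(n^2)$ time, and computing the radius of the MEC at each vertex is $O(1)$ per vertex given the Voronoi structure, so we have $O(n)$ circles assembled in $O(n^2)$ time. Feeding these $O(n)$ circles into the {\tt LCQ} data structure costs $O(p(n))$ time and $O(s(n))$ space by hypothesis. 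Line~\ref{lno:bfs} runs, for each of the $O(n)$ internal vertices $v$, a breadth-first search over the $O(n)$-size Voronoi graph; as noted just before the procedures, each such BFS produces $\O_v$ in $O(n)$ time, so the loop contributes $O(n^2)$ time, and by Lemma~\ref{lem:BoundingS2} each $\O_v$ has constant size, so storing all of them costs $O(n)$ space. Summing, preprocessing is $O(p(n)+n^2)$ time and $O(s(n)+n)$ space, as claimed.

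For the query bound, the argument is that every step of Procedure~\ref{alg:query} is either a single {\tt LCQ} query or a constant-time computation. Line~2 performs one {\tt LCQ} query at cost $O(q(n))$ and identifies the center $v^q$ of the returned circle $C^q$. If $v^q$ is an artificial vertex we exit immediately after $O(1)$ work (lines~4--5). Otherwise the \texttt{for}-loop at line~\ref{lno:for} iterates over $\O_{v^q}$, which has constant size by Lemma~\ref{lem:BoundingS2}; for each edge $e=(v^{k-1},v^k)$ we must test whether some MEC centered on $e$ encloses $q$ and, if so, find the largest such one (lines~\ref{lno:update} and the following $\max$). The key sub-claim to nail down here is that this per-edge test-and-find is $O(1)$: as one slides the center along the Voronoi edge $e$ from $v^{k-1}$ to $v^k$ the MEC's defining pair of sites is fixed, so the MEC's center and radius are explicit (constant-degree) functions of the single sliding parameter, the set of positions on $e$ whose MEC encloses $q$ is an interval determined by solving a constant-degree equation, and the largest enclosing MEC is attained at an endpoint of that interval; hence $C^e$ is computed in $O(1)$. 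Therefore the loop is $O(1)$ overall and the total query time is $O(q(n))$.

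The main obstacle I expect is precisely this $O(1)$-per-edge claim and, more importantly, making sure the counting arguments invoked are actually applicable: Lemma~\ref{lem:BoundingS2} bounds $|\O_v|$, but its proof (the excerpt shows it) reuses the angular-packing argument of Lemma~\ref{lem:BoundingS}, so I would want to double-check that the hypotheses of that argument (every overlapping MEC tangentially intersects $MEC_v$, the chord of contact separates the center from the relevant interior point, giving a right angle and a four-fold packing bound) transfer cleanly to the point-set Voronoi setting rather than the simple-polygon medial-axis setting. The second thing to be careful about is that line~2's single {\tt LCQ} query genuinely captures the relevant vertex: this is a correctness issue rather than a complexity one, so for the purposes of this lemma I would simply record that Lemma~\ref{lem:PointSetBounds} is a complexity statement and that correctness of Procedures~\ref{alg:preprocessing} and~\ref{alg:query} is established separately (the excerpt says "We address the latter first"), and confine the proof to the time/space accounting above.

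Putting it together, the proof is short: enumerate the lines of Procedure~\ref{alg:preprocessing}, charge line~2 to $O(p(n)+n^2)$ time and $O(s(n)+n)$ space and line~\ref{lno:bfs} to $O(n^2)$ time and $O(n)$ space (using Lemma~\ref{lem:BoundingS2}); then enumerate the lines of Procedure~\ref{alg:query}, charge line~2 to $O(q(n))$, observe lines~4--5 are $O(1)$, and bound the \texttt{for}-loop by a constant number of iterations (Lemma~\ref{lem:BoundingS2}) each doing $O(1)$ work by the explicit-parametrization remark; conclude $O(q(n))$ total.
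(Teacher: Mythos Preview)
Your proposal is correct and follows essentially the same approach as the paper's own proof: charge line~2 of Procedure~\ref{alg:preprocessing} to the {\tt LCQ} construction, charge line~\ref{lno:bfs} to $O(n)$ breadth-first searches of $O(n)$ each, and use Lemma~\ref{lem:BoundingS2} to bound both the storage of all $\O_v$'s and the number of iterations in the query loop. Your write-up is in fact more thorough than the paper's, which is only three sentences and does not spell out the $O(1)$ per-edge argument you give for line~\ref{lno:update}; that extra detail is a welcome addition rather than a deviation.
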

\begin{proof}

The complexity bounds for {\tt LCQ} data structure are added for the
obvious reason that we use the {\tt LCQ} data structure for creating
and storing $O(n)$ MECs centered at the internal vertices of the
Voronoi diagram.  Line number~\ref{lno:bfs} of Procedure
\ref{alg:preprocessing} performs $O(n)$ breadth first searches, hence
we added an $O(n^2)$ term to the preprocessing time. As a consequence
of Lemma~\ref{lem:BoundingS2}, our space requirements is limited to
$O(n)$ and, more importantly, the query time does not incur anything
more than $q(n)$. \qed
\end{proof}
\begin{lemma}\label{lem:cycle}
Consider any cycle $H$ in the Voronoi diagram of $P$. Let $C_H$ be any
MEC centered at some point on $H$. Then, there exists another MEC
$C'_H$ centered at some other point on $H$ that does not properly
overlap $C_H$. 
\end{lemma}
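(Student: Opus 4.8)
The plan is to exploit the fact that a cycle $H$ in the Voronoi diagram, being a closed curve, bounds a region, and that the medial-axis-like monotonicity of MEC radii along edges forces the radius function restricted to $H$ to attain both a maximum and a minimum. First I would parametrize $H$ by a continuous loop $\gamma : [0,1] \to H$ with $\gamma(0)=\gamma(1)$, and let $g(t)$ be the radius of the MEC centered at $\gamma(t)$; this is a continuous function on the circle. Since $H$ is a cycle in the Voronoi diagram, it must enclose at least one point of $P$ in its interior (a Voronoi cycle cannot bound an empty region, because Voronoi cells are unbounded only for hull points and the bounded faces each contain exactly one site). Fix such an enclosed site $p \in P$.

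Next I would argue geometrically that the MEC $C_H$ centered at a point $x$ on $H$ cannot contain $p$ in its interior: since the center $x$ lies on a Voronoi edge or vertex, $x$ is equidistant to its two (or three) defining sites and no closer to any other site, so $p$ lies on or outside $C_H$ if $p$ is not one of the defining sites; and if $p$ were a defining site, $p$ is on the boundary of $C_H$. Either way, $p \notin \mathrm{int}(C_H)$. The same holds for every MEC centered on $H$. So every MEC centered on $H$ has $p$ on its boundary or outside, while $C_H$ itself has $p$ on its boundary or outside and is centered at some point $x_0 = \gamma(t_0)$.

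The key step is then: take $x_1 = \gamma(t_1)$ to be the point on $H$ \emph{farthest} from the center $x_0$ of $C_H$; equivalently, since $H$ is a closed loop around $p$, there is a point $x_1$ on $H$ that is separated from $x_0$ by $p$ in the sense that the segment (or the relevant geometry) places $x_0$ and $x_1$ on opposite sides of $p$. Let $C'_H = MEC_{x_1}$. Because $p$ lies on or outside both $C_H$ (center $x_0$) and $C'_H$ (center $x_1$), and $p$ separates the two centers, the two disks can overlap only in a lune that does not contain $p$; more carefully, I would show $\mathrm{dist}(x_0,x_1) \ge \mathrm{dist}(x_0,p) + \mathrm{dist}(p,x_1) \ge$ (a lower bound forcing the disks not to properly overlap, using that each disk's radius is at most its center's distance to $p$ only when $p$ is on the boundary — I would need to handle the case $p$ strictly outside by picking $p$ to be a site on $\partial C_H$, which always exists since $C_H$ is an MEC and touches at least one site). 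The cleanest route: let $p$ be one of the sites touched by $C_H$ on its boundary; since $p$ is enclosed by $H$ (or, if not, choose an enclosed site and redo the separation argument with the boundary point), walk along $H$ to a point $x_1$ whose MEC also touches $p$ on its boundary but approaches $p$ from the far side — then the radii satisfy $r_0 + r_1 \le \mathrm{dist}(x_0, x_1)$ by the triangle inequality through $p$, so $C_H$ and $C'_H$ meet in at most one point, i.e., do not properly overlap.

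The main obstacle I anticipate is the edge case analysis: guaranteeing the existence of a point $x_1 \in H$ whose MEC touches the chosen site $p$ \emph{from the opposite side}, and handling degeneracies where $H$ is small and all MECs on it touch $p$ on the same side. I would resolve this by using the Jordan curve property — $H$ is a closed loop, and the locus of centers whose MEC touches $p$ on a fixed side is a connected arc; since $H$ winds around $p$, it must leave that arc, and at the transition the MEC touches $p$ "tangentially from the other side," giving the needed $x_1$. If even that fails (e.g., $p$ is never touched from the far side), then $H$ lies entirely in a halfplane through $p$, contradicting that $H$ encloses $p$. This compactness-plus-separation argument, together with the triangle inequality $r_0 + r_1 \le |x_0 x_1|$, yields the non-proper-overlap conclusion.
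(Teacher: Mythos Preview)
Your ingredients are right---an enclosed site $p$, the emptiness of MECs, and the triangle inequality through $p$---but you are missing the one-line construction that makes them click, and your substitutes for it do not quite work.

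The paper's proof is this: pick any site $p\in P$ enclosed by $H$, draw the line through $c_H$ and $p$, and let $c'_H$ be the second point where this line meets $H$ (it exists by the Jordan curve theorem, and $p$ lies strictly between $c_H$ and $c'_H$). Then $|c_H-c'_H|=|c_H-p|+|p-c'_H|\ge r_H+r'_H$, since neither MEC contains $p$ in its interior. Hence $C_H$ and $C'_H$ are externally disjoint or tangent; in particular they do not properly overlap. That is the whole argument.

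Your ``farthest point on $H$'' choice does not guarantee that $p$ lies on the segment $x_0x_1$, so the equality $|x_0-x_1|=|x_0-p|+|p-x_1|$ fails and you only get the ordinary triangle inequality in the wrong direction. Your fallback---choosing $p$ to be a site touched by $C_H$ and then hunting for an $x_1\in H$ whose MEC also touches $p$ ``from the far side''---is both unnecessary and not guaranteed to succeed: the site touched by $C_H$ need not lie inside $H$, and even if it does, $H$ need not contain any other boundary point of that site's Voronoi cell. Finally, your worry that ``each disk's radius is at most its center's distance to $p$ only when $p$ is on the boundary'' is misplaced: for \emph{any} site $p\in P$ and any MEC $C$ with center $c$ and radius $r$, emptiness gives $|c-p|\ge r$ outright, with no boundary assumption. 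Once you see that, the line-through-$p$ construction removes every edge case you were anticipating.
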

\begin{proof}
Clearly, any cycle in the Voronoi diagram of $P$ must contain at 
least one point from $P$ inside it. Let $p \in P$ be such a  point
that lies inside the cycle $H$ (see Figure~\ref{fig:cycle}). Let $C_H$
be any MEC centered at some point on $H$; $c_H$ be its center.
Consider the line connecting $c_H$ and $p$. It intersects $H$ at
another point $c'_H$. It is easy to see that the MEC $C_H'$, centered
at $c'_H$, will not properly overlap with $C_H$. Because, in that case
$p$ will be properly contained within $C_H$ and $C'_H$. 
\begin{figure}
\centering
\includegraphics[width=0.40\textwidth]{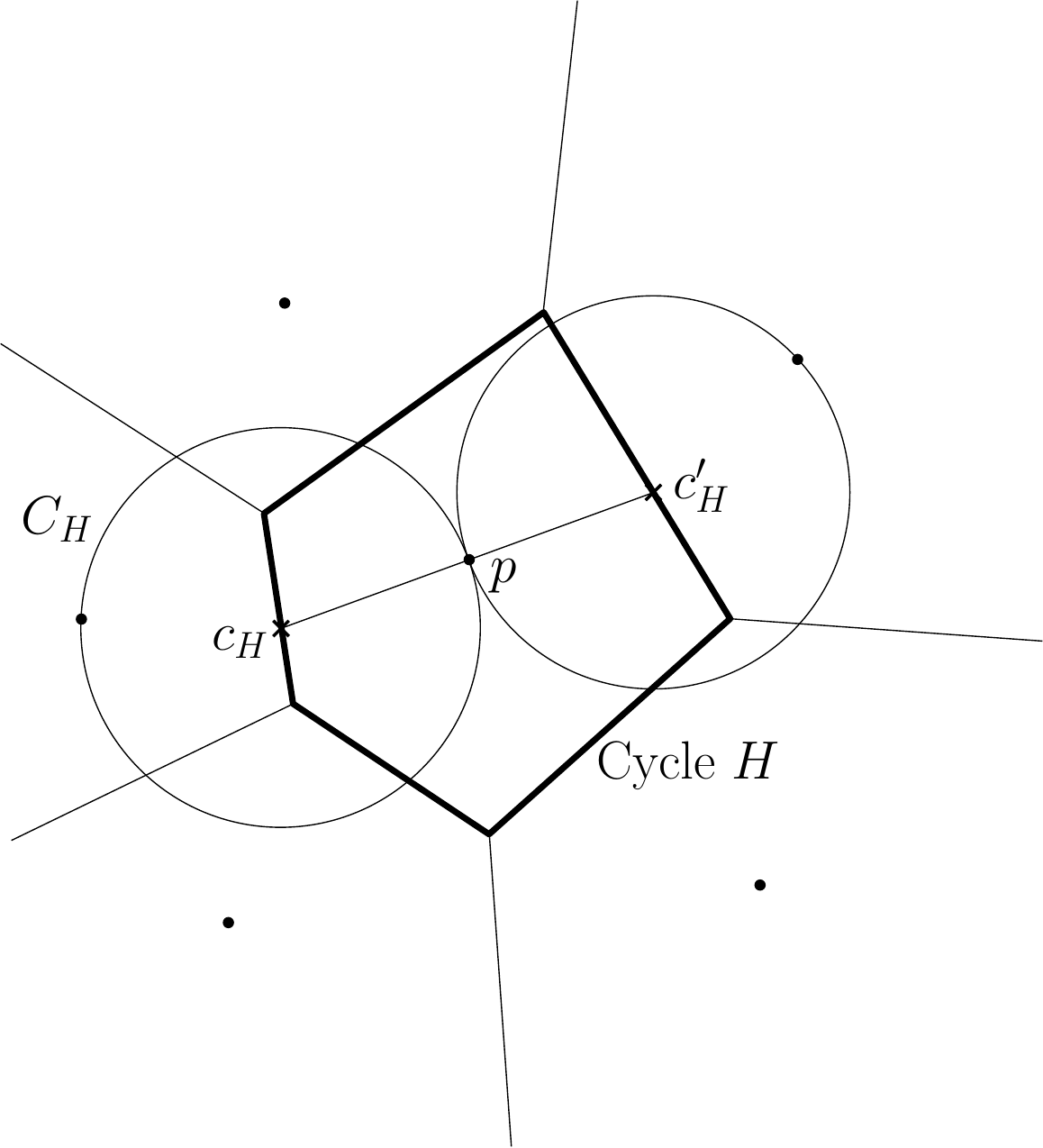}
\caption{Illustration of Lemma~\ref{lem:cycle}; the edges in $H$ are
shown darker} 
\label{fig:cycle}
\vspace{-0.1in}
\end{figure}
\qed
\end{proof}

\begin{lemma}[Unique Path Lemma]\label{lem:onepath}
Let $C$ and $C'$ be any two distinct but overlapping MECs with center
at $c$ and $c'$ respectively. There is exactly one path from $c$ to
$c'$ along the Voronoi edges such that every MEC centered on that path
encloses $C\cap C'$.
\end{lemma}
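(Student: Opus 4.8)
The plan is to prove existence and uniqueness separately, both leveraging the Voronoi structure and the monotonicity behaviour of MEC radii along edges. For existence, I would first observe that $C \cap C'$ is a nonempty convex region (a lens), and every point of it lies in an empty region of $P$; in particular the segment $[c,c']$ itself lies inside $C \cup C'$. Consider any point $x$ that moves continuously from $c$ to $c'$ along Voronoi edges. The key local fact is that as a point traverses a single Voronoi edge, the MEC radius is a strictly unimodal (in fact strictly monotone, away from the single extreme) function, and the MEC sweeps continuously. I would argue that the union of all Voronoi edges on which \emph{some} MEC enclosing $C\cap C'$ is centered forms a connected set containing both $c$ and $c'$: this follows because if $MEC_x \supseteq C\cap C'$ and $MEC_y \supseteq C \cap C'$ for two ``nearby'' vertices, then along the edge joining them the MEC radius only dips or rises, and a dip cannot make the MEC stop containing the (small) lens without the lens poking out of the polygon/convex-hull region, contradicting emptiness. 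Concatenating, we extract a path from $c$ to $c'$ all of whose MECs enclose $C \cap C'$.

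For uniqueness, I would argue by contradiction: suppose there are two distinct such paths $\Pi_1$ and $\Pi_2$ from $c$ to $c'$. Their symmetric difference contains a cycle $H$ in the Voronoi diagram. By Lemma~\ref{lem:cycle}, any cycle $H$ in the Voronoi diagram encloses a point $p \in P$, and there is some MEC $C'_H$ centered on $H$ that does not properly overlap $C_H$ for a given $C_H$ on $H$ — in fact, the argument there shows that for the point $p$ inside $H$, diametrically opposite centers on $H$ give MECs on opposite sides of $p$. I would then pick $C_H := C$ (or an MEC on $H$ near $c$); every MEC centered on $H$ is supposed to enclose $C \cap C'$, but the lemma forces the existence of a center $c'_H$ on $H$ whose MEC lies on the far side of $p$ from $C$, hence cannot contain a lens that sits on $C$'s side near $c$ — contradiction, since $p$ would have to lie outside $C \cap C'$ yet the geometry squeezes it in. This contradiction rules out the cycle, so the path is unique.

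The main obstacle I expect is making the existence argument fully rigorous at the ``dip'' step: I need to rule out that the connected family of enclosing-MEC edges fails to reach from $c$ to $c'$ because along every route the MEC momentarily shrinks enough to drop the lens. The clean way to handle this is to use the convexity of $C \cap C'$ together with the fact that $C \cap C'$ is contained in $C$, whose center $c$ is a Voronoi vertex: as we leave $c$ along the correct descending direction the MEC still contains the lens for a while, and I must show the ``still contains $C\cap C'$'' condition is preserved by following, at each vertex, the edge toward $c'$ dictated by the planar embedding. Here I would invoke that the Voronoi diagram is a planar subdivision and that $C \cap C'$, being a small convex region contained in two empty discs whose centers are Voronoi vertices, determines a well-defined ``corridor'' of cells between $c$ and $c'$; the path is obtained by walking the Voronoi edges bounding that corridor. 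Once that corridor is identified, existence is immediate and uniqueness is exactly the cycle argument above.
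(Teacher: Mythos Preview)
Your uniqueness argument is essentially the paper's: two distinct paths would create a cycle, and Lemma~\ref{lem:cycle} then furnishes an MEC on the cycle that fails to overlap another, contradicting the hypothesis that every MEC on both paths contains $C\cap C'$. That part is fine.

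The existence argument, however, has a real gap, and you have correctly put your finger on it yourself. Arguing that the set of Voronoi edges carrying an MEC that encloses $C\cap C'$ is connected does not follow from the unimodality of MEC radii along a single edge: a ``dip'' in radius at an intermediate vertex could, a priori, shrink the MEC enough to lose the lens, and nothing in your sketch rules this out. Your fallback ``corridor'' idea is not developed into an actual argument. The paper avoids this difficulty entirely by \emph{constructing} the path with an explicit local rule: at the current vertex $c_i$ with MEC $C_i$, the two intersection points $t_1,t_2$ of $C$ and $C'$ lie on a single arc of $C_i$ (because $C'$ is empty of the points of $P$ touching $C_i$); the Voronoi edge associated to that arc is the next step. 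The crucial invariant proved is that $C_i\cap C' \subsetneq C_{i+1}\cap C'$, i.e., the intersection with $C'$ strictly grows along the constructed path. This immediately gives containment of $C\cap C'$ at every step, forbids cycles via Lemma~\ref{lem:cycle}, and forces termination at $c'$ in finitely many steps. You are missing both this local next-step rule and the monotone-intersection invariant; without something playing their role, the ``dip'' obstacle you identified is genuine and your existence argument does not close.
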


\begin{proof}
The structure of the proof is as follows. We provide a procedure that 
constructs a path $\Pi(c,c')$ from $c$ to $c'$ along the Voronoi
edges, and ensure that every MEC centered on that path encloses $C
\cap C'$. As a consequence of Lemma \ref{lem:cycle}, the path does not
form an intermediate cycle and terminates at $c'$. Finally, we again
use Lemma~\ref{lem:cycle} to show that no path $\P$, other than
$\Pi(c,c')$, exists between $c$ and $c'$ such that every MEC centered
on $\P$ contains $C \cap C'$. Throughout this proof, we closely follow
Figure~\ref{fig:nsp} in order to keep the arguments intuitive. To keep
arguments simple, we assume that $c$ and $c'$ are Voronoi vertices.
When $c$ and $c'$ are not Voronoi vertices, then also the same
argument follows. 

Let $\alpha$ be the number of points in $P$ that $C$  touches. These
$\alpha$ points partition  $C$ into $\alpha$ arcs. The degree of
the corresponding Voronoi vertex $c$ is also $\alpha$ because each
adjacent pair of points from $P$ that lie on the boundary of $C$ will
induce a Voronoi edge incident on $c$ and vice versa. These Voronoi
edges and their corresponding arcs are denoted by $e^j_C$ and $s^j_C$,
for $1 \le j \le \alpha$.

Consider the other  MEC $C'$ ($\neq C$ and centered at a vertex $c'$)
that overlaps with $C$. $C'$ intersects $C$ at two points $t_1$ and
$t_2$; both $t_1$ and $t_2$ must lie in one of the $\alpha$ arcs of
$C$ (due to the emptiness of $C'$). Let us name this arc by $s^j_C$.
Consider the edge $e^j_C = (c,c_2)$ that corresponds to the arc
$s^j_C$. The other end of $e^j_C$, i.e., the vertex $c_2$, is called
the {\em next step from $c$ toward $c'$} and denote it as
$\mathbf{ns}(c,c')$. Consider the following code that generates a path
denoted by $\mathbf{\Pi}(c,c')$:

\vspace{-0.1in}

\begin{algorithm}[h!]
\caption{$\Pi(c,c')$ Computation}
\label{alg:pi}
\begin{algorithmic}[1]
\STATE $\mathbf{\Pi}(c,c') \leftarrow (c)$
%\COMMENT{Initialize $\mathbf{nsp}(c,c')$.}
\STATE $\mathbf{next} \leftarrow c$
\REPEAT
\STATE $\mathbf{next} \leftarrow \mathbf{ns}(\mathbf{next},c')$
\STATE Append $\mathbf{next}$ to $\mathbf{\Pi}(c,c')$
\UNTIL{$\mathbf{next}$ equals $c'$}
\COMMENT{This is the only terminating condition.}
\end{algorithmic}
\end{algorithm}

\begin{figure}[h]
\centering
\includegraphics[width=0.5\columnwidth]{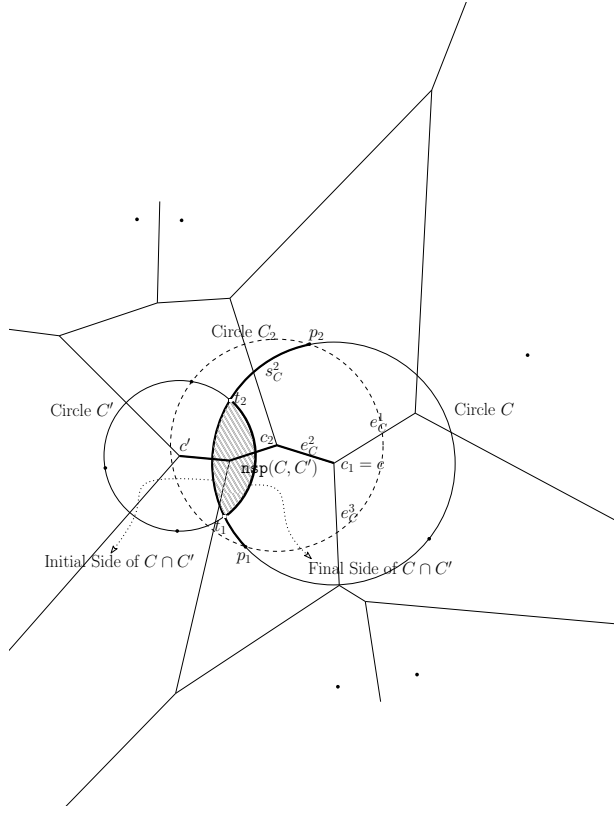}
\caption{Illustration of $\Pi(c,c')$.}
\label{fig:nsp}\vspace{-0.1in}
\end{figure}

Let $\mathbf{\Pi}(c,c')$ as constructed above be $(c_1=c, c_2, \ldots,
c_i, c_{i+1}, \ldots, c')$. Let $C_2$ denote the
MEC centered at $c_2$. If $C_2$ is the circle $C'$, then the procedure
terminates and, as required, every MEC in the edge $(c,c_2)$ 
encloses $C \cap C_2 = C \cap C'$.  

Therefore, consider the case where $C_2$ is not $C'$. Let $p_1, p_2
\in P$ be the points at which $C$ and $C_2$ intersect; $p_1,p_2$ are
the end points of the arc $s^j_C$ that defines the next step move
toward $c'$ (in Figure \ref{fig:nsp}, $j$ is $2$). Therefore, by
definition, $t_1$ and $t_2$ lie on the arc $s^j_C$. Notice that $C
\cap C'$ (shown shaded in Figure~\ref{fig:nsp}) is shaped like a rugby
ball with $t_1$ and $t_2$ at its end-points. One side of $C \cap C'$
(called the {\it initial side}) is in $C$ and the other side 
(called the {\it final side}) is in $C'$. Clearly, $t_1$ and
$t_2$ are inside (or on the boundary of) every MEC centered on the
edge $e^j_C$. Otherwise, as we go from $C$ to $C_2$, there will be a
circle that touches the final side of $C \cap C'$, but that would mean
that we have either 
\vspace{-0.1in}
\begin{itemize}
\item reached $C'$, which contradicts our assumption that $C_2$ is not
$C'$, 
\item or found a MEC that contains $C'$, which contradicts the fact
that $C'$ is itself an MEC.
\end{itemize}
\vspace{-0.1in}

We now make two observations: (i) $C$ touches the initial side, but
(ii) no other MECs centered on $e^j_C$ (and $C_2$ in particular)
touches the final side Suppose (for the sake of contradiction) that
there is a MEC $C^*$ centered on $e^j_C$ that touches the final side
of $C \cap C'$ at, say, some point $t^*$. It is easy to see that $C^*$
will contain $C'$ because it touches $t^*$ on $C'$ and contains $t_1$
and $t_2$, which are also on $C'$ --- this is a contradiction due to
the observations (i) and (ii), stated above. Thus, it is clear that $C
\cap C'$ is properly contained within $C \cap C_2$.  

Consider two adjacent vertices $c_i$ and $c_{i+1}$ along $\Pi(c,c')$
with MECs $C_i$ and $C_{i+1}$ centered on them, respectively. The
above argument can be easily extended to give us the following: 
$$C_i \cap C' \subset C_{i+1} \cap C'.$$ Therefore, we can conclude
that every MEC along $\Pi(c,c')$ encloses $C \cap C'$. Given
Lemma~\ref{lem:cycle}, we can also conclude that $\Pi(c,c')$ does not
form a cycle. The only stopping condition is when we actually reach
$c'$, so $\Pi(c,c')$ terminates at $c'$ in $O(n)$ steps. Therefore, 
$\Pi(c,c')$ fulfils our requirements.

To complete the proof of this lemma, we must show that $\Pi(c,c')$ is
the only required path. For the sake of contradiction, assume that
there is another path $\Pi'$ such that every MEC centered on $\Pi'$
contains $C \cap C'$. Then, there are two distinct paths from $c$ to
$c'$ such that every MEC centered on both paths overlapped with $C
\cap C'$. Clearly, there must be a cycle when the two paths are
combined. From Lemma~\ref{lem:cycle}, we know that there are pairs of
MECs in the cycle that will not overlap each other.  This is a
contradiction, thus proving that $\Pi(c_1,c_2)$ is the only required
path and concluding the proof of the lemma. \qed
\end{proof}

\begin{figure}
\centering
\includegraphics[width=0.50\textwidth]{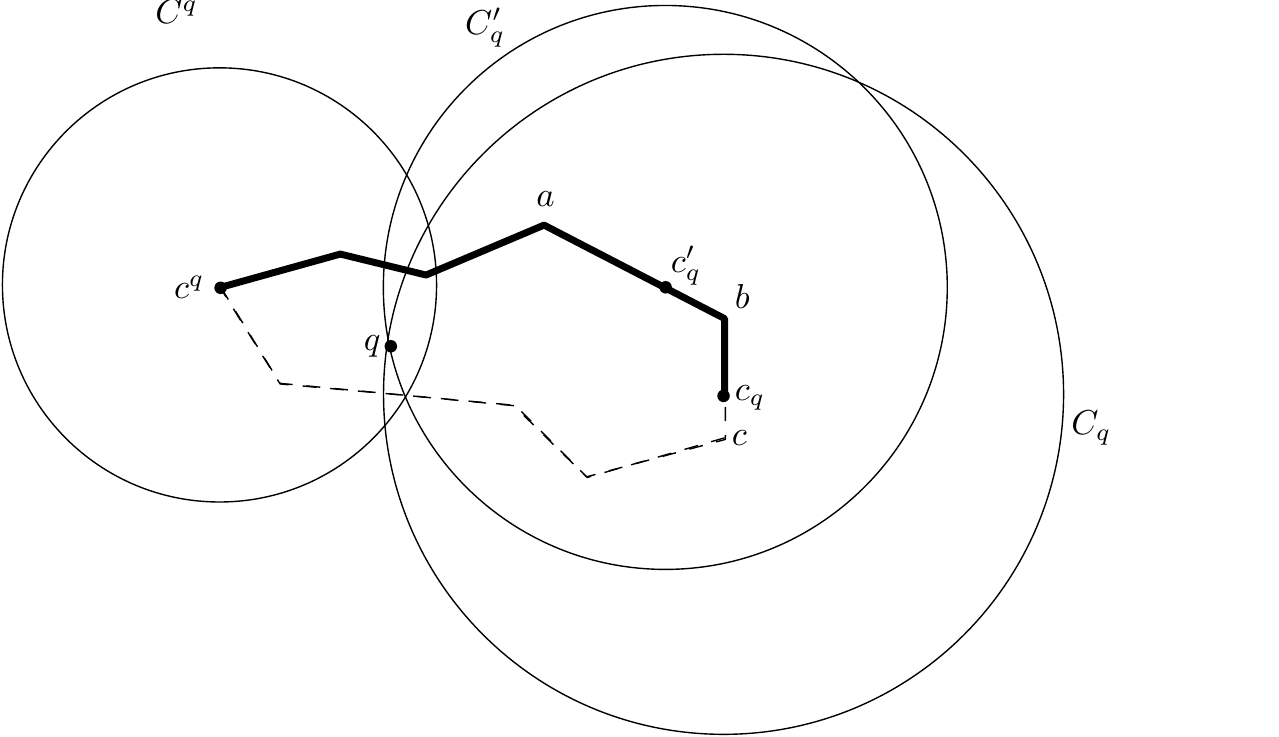}
\caption{Illustrates the usefulness of the Unique Path Lemma.}
\label{fig:ThoughtfulExample}\vspace{-0.1in}
\end{figure}

To illustrate the usefulness of the {\it unique path lemma}, consider
the example depicted in Figure \ref{fig:ThoughtfulExample}. Let $C^q$
(centered at $c^q$) be the MEC returned by the LCQ data structure when
queried with $q$. Suppose that $C_q$ (centered on $c_q$) is the
largest empty circle containing $q$. Suppose Procedure~\ref{alg:query}
reports $C'_q$ centered on $c'_q$, which lies on an overlapping edge
$(a,b)$. In terms of size, let $C^q < C'_q < C_q$. Such a behavior
will render our algorithm incorrect. However, this incorrect behavior
is only possible when $c_q$ does not lie on an overlapping edge with
respect to $c^q$, and therefore, Procedure~\ref{alg:query} may never
find it. This may happen in two possible situations: 
\begin{description}
\item[Situation 1:] {\bf The path shown in thick continuous segments
is $\Pi(c^q,c_q)$.} Since Procedure~\ref{alg:query} reported $c_q'$,
the edge $(a,b)$ is  an overlapping edge, the MEC at $b$ must
be larger than $C^q$. However, the LCQ data structure did not report
the MEC at $b$, but rather reported $C^q$.  While $q \in C^q$ and $q
\in C_q$, $q$ is not contained within the MEC at $b$. From the {\it
unique path lemma}, clearly, $b$ cannot be in $\Pi(c^q,c_q)$. Thus
such a situation is impossible. 
\item[Situation 2:] {\bf Some other path that does not contain
$c_q'$ (shown using dashed line segments) is $\Pi(c^q,c_q)$.} This
case is possible and the edge $(c,b)$, shown in Figure
\ref{fig:ThoughtfulExample}, is an overlapping edge through that
dashed path. Therefore, the edge $(c,b)$ will be considered by
Procedure~\ref{alg:query} and $C_q$ will be reported correctly. 
\end{description}

\begin{lemma}\label{lem:PointSetCorrectness}
If a set $P$ of points from $\mathbb{R}^2$ are preprocessed by
Procedure~\ref{alg:preprocessing}, then Procedure~\ref{alg:query},
when invoked with a query point $q$, correctly reports the largest
circle that contains $q$ but is devoid of points from $P$. 
\end{lemma}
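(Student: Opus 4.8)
The plan is to establish correctness by combining the Unique Path Lemma (Lemma~\ref{lem:onepath}) with the structural facts about overlapping edges and the LCQ data structure. Let $C_q$ (centered at $c_q$) be the true largest empty circle containing $q$, and let $C^q$ (centered at $c^q$) be the circle returned by the LCQ query in line~2 of Procedure~\ref{alg:query}. First I would dispense with the easy cases: if $c^q$ is an artificial vertex, then by the construction of artificial vertices (condition~2, no two such MECs overlap inside the convex hull), $q$ can lie in at most one MEC centered at an artificial vertex, and since $q$ lies in the interior of the convex hull, the MEC through $q$ on that edge is precisely $C_q$; the procedure reports it correctly and exits. Otherwise $c^q$ is a Voronoi vertex, and we must show that the loop over $\O_{c^q}$ discovers $C_q$.

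The heart of the argument is to analyze the path $\Pi(c^q, c_q)$ guaranteed by Lemma~\ref{lem:onepath}: this is the unique path along Voronoi edges such that every MEC centered on it encloses $C^q \cap C_q$ (note $q \in C^q \cap C_q$, so every MEC on this path contains $q$). I would trace along $\Pi(c^q,c_q)$ starting from $c^q$ and argue that it must exit via a \emph{rising edge} with respect to $c^q$, and moreover via an \emph{overlapping} rising edge. Since $C^q$ is the \emph{largest} MEC centered at any vertex that contains $q$ (that is what LCQ returns among vertex-centered MECs), and since $C_q$ is strictly larger than $C^q$ (if $C_q=C^q$ we are already done), the sizes of MECs along $\Pi(c^q,c_q)$ start at $r(C^q)$, and somewhere along the path the MEC size must first exceed $r(C^q)$. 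Let $(v^{k-1},v^k)$ be the first edge along $\Pi(c^q,c_q)$ where this crossing happens. By definition this is a rising edge with respect to $c^q$; and because every MEC on $\Pi(c^q,c_q)$ encloses $C^q \cap C_q \supseteq \{q\}$ and in particular overlaps $C^q$, the distinguished circle $MEC_{\P_{c^q}}$ of size $r(C^q)$ on this edge overlaps $MEC_{c^q}=C^q$, so the edge lies in $\O_{c^q}$. Hence this edge is examined by the \textbf{for} loop in line~\ref{lno:for}, and since $q$ is enclosed by the MECs near the large end of this edge, the test in line~\ref{lno:update} computes a circle $C^e$ centered on $e$ that encloses $q$; I then need that the largest such $C^e$ is at least as large as $C_q$, or else that iterating this argument eventually reaches $c_q$ itself.

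This last point is the main obstacle: $C_q$ need not be centered on an edge in $\O_{c^q}$ directly (as Figure~\ref{fig:ThoughtfulExample}, Situation~2 illustrates), so a single pass over $\O_{c^q}$ might only reach an intermediate circle $C^e$ with $C^q < C^e < C_q$. To handle this I would argue that $C^e$ is itself a vertex-or-edge-centered MEC containing $q$ that is larger than $C^q$, which contradicts the maximality of $C^q$ \emph{among vertex-centered MECs} only if $C^e$ is centered at a vertex — so actually the correct framing is: the true $C_q$ is centered on $\Pi(c^q,c_q)$ at the unique point where size equals $r(C_q)$, and since $C^q$ is the largest \emph{vertex-centered} MEC through $q$, no interior vertex of $\Pi(c^q,c_q)$ can have its MEC contain $q$ and be larger than $C^q$; hence every vertex strictly between $c^q$ and the rising edge has MEC \emph{not} containing $q$ or \emph{not} larger than $C^q$, which combined with the rising-path definition forces the very first rising edge out of $c^q$ along $\Pi(c^q,c_q)$ to already be the edge carrying $c_q$. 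This collapses Situation~2 to the claim that $(v^{k-1},v^k)$ is exactly the edge $(a,b)$ containing $c_q$, and then $C^e = C_q$ and line~\ref{lno:update} sets $C \leftarrow C_q$. Reporting $C$ in the final line then gives the correct answer, completing the proof. I would write the details of this collapsing step carefully, as it is where the interplay between ``largest vertex-centered MEC'' and ``largest MEC overall'' must be pinned down using the monotonicity of MEC radius along Voronoi edges and the Unique Path Lemma.
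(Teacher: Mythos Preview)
Your approach is essentially the paper's: apply the Unique Path Lemma to $\Pi(c^q,c_q)$, observe that every MEC along it contains $q$ so (by the maximality of $C^q$ from the {\tt LCQ} step) every interior vertex of the path has MEC no larger than $C^q$, and conclude that the edge carrying $c_q$ is itself a rising---indeed overlapping---edge in $\O_{c^q}$, which the loop then examines and on which it finds $C_q$. The ``obstacle'' paragraph is an unnecessary detour (the paper goes directly to this conclusion with no worry about iterating), and the fact you need about the far endpoint of that edge follows from \emph{convexity} of the MEC radius along a Voronoi edge rather than monotonicity, but your eventual argument is correct and matches the paper's.
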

\begin{proof}
Let $C_q$ be the largest circle that contains $q$ but is devoid of
points from $P$. In short, Procedure~\ref{alg:query} must report
$C_q$. If $C_q$ is centered on a Voronoi vertex, then, clearly,
Procedure~\ref{alg:query} reports it correctly.  

We now show that if $q$ is contained by a MEC centered at an
artificial vertex $v$, then also the algorithm reports the correct
$C_q$.
Let $e$ be the half line edge containing $v$.
Procedure~\ref{alg:query} only searches edge $e$, which, we claim is
sufficient. Suppose for the sake of contradiction that $C_q$ is
centered on some other edge $e'$.  Clearly, $e'$ cannot be a bounded
Voronoi edge, because the MEC at $v$ is larger than MECs centered on
bounded Voronoi edges. Recall that our construction of artificial
vertices ensures that no two MECs centered on artificial vertices will
overlap inside the convex hull of $P$. Therefore, $e'$ cannot be an
edge that is a half line either, because, the MEC at $v$ contains $q$,
so the MEC centered at the artificial vertex on $e'$  cannot contain
$q$. Therefore, such an $e'$ cannot exist and we can conclude that the
algorithm correctly reports the largest MEC  centered on some point in
$e$ that contains $q$ as $C_q$.

For the rest of the proof, we assume that $C_q$ is not centered on a
Voronoi vertex and $q$ is not enclosed by any MEC centered at an
artificial vertex. Recall from Procedure~\ref{alg:query} that $C^q$
is the largest MEC that is centered on a vertex and contains $q$.
Clearly, $C^q \cap C_q \ne \emptyset$ as it at least contains $q$.
From Lemma~\ref{lem:onepath}, there is exactly one path $\Pi(c^q,
c_q)$ from the center of $C^q$ to the center of $C_q$ such that every
circle in that path contains $C^q \cap C_q$. Clearly,  MECs centered
at vertices in $\Pi(c^q,c_q)$ other than the centers of $C^q$
and $C_q$ are smaller than $C^q$; otherwise, the {\tt LCQ} data
structure would not have chosen $C^q$. However, $C_q$ is larger than
$C^q$. Consider the two vertices connected by the edge that contains
the center of $C_q$. The MEC centered on one of them must be strictly
larger than $C^q$, while the MEC on the other must be strictly smaller
than $C^q$. Therefore, it is easy to see that $C_q$ is centered on an
overlapping edge. Since the algorithm searches through all overlapping
edges, it will find and report $C_q$ correctly. \qed
\end{proof}
Lemma~\ref{lem:PointSetBounds} coupled with the line sweep method of
implementing the {\tt LCQ} data structure and
Lemma~\ref{lem:PointSetCorrectness} immediately lead to the following
theorem. 

\begin{theorem}\label{thm:PointSetResult}
Given a set $P$ of points in $\mathbb{R}^2$, we can preprocess $P$ in
$O(n^2 \log n)$ time and $O(n^2)$ space so that the resulting data
structure can be queried for the largest empty circle containing the
query point $q$ in $O(\log n)$ time.  
\end{theorem}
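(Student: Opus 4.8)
The plan is to assemble the theorem from machinery already in place: the two procedures of Section~\ref{sec:qmec}, their parametric cost analysis in Lemma~\ref{lem:PointSetBounds}, and their correctness in Lemma~\ref{lem:PointSetCorrectness}. First I would recall that Lemma~\ref{lem:PointSetBounds} expresses the costs of Procedure~\ref{alg:preprocessing} and Procedure~\ref{alg:query} in terms of the {\tt LCQ} parameters: if {\tt LCQ} can be built in $O(p(n))$ time and $O(s(n))$ space and answers queries in $O(q(n))$ time, then preprocessing costs $O(p(n)+n^2)$ time and $O(s(n)+n)$ space, while a query costs $O(q(n))$ time. The additive $n^2$ terms come from the $O(n)$ breadth-first searches that build the sets $\O_v$ (each $O(n)$ time) and from the $O(n^2)$-time selection of the artificial vertices; crucially, Lemma~\ref{lem:BoundingS2} keeps $|\O_v|$ constant, so storing all the $\O_v$ adds only $O(n)$ to the space and a query inspects only a constant number of overlapping edges.

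Next I would instantiate the parameters with the line-sweep implementation of {\tt LCQ} (the second construction of Section~\ref{sec2}), which gives $p(n)=n^2\log n$, $s(n)=n^2$, and $q(n)=\log n$. Substituting into Lemma~\ref{lem:PointSetBounds} yields preprocessing time $O(n^2\log n + n^2)=O(n^2\log n)$, space $O(n^2+n)=O(n^2)$, and query time $O(\log n)$, which are exactly the claimed bounds. (Using the divide-and-conquer {\tt LCQ} instead would give $O(n^2 + n\log^2 n)=O(n^2)$ preprocessing but a worse $O(\log^2 n)$ query; the stated theorem opts for the better query time.)

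Finally, correctness is precisely Lemma~\ref{lem:PointSetCorrectness}: Procedure~\ref{alg:query} reports the largest empty circle containing $q$ whenever $q$ lies strictly inside the convex hull of $P$, treating separately the cases where $C_q$ is centered at a Voronoi vertex, at an artificial vertex, or on an overlapping edge (the last relying on the Unique Path Lemma, Lemma~\ref{lem:onepath}). The complementary case, $q$ outside or on the convex hull, is trivial — an unbounded-radius circle through $q$ avoids all of $P$ — and a point-location test against the hull decides which case applies in $O(\log n)$ time, within the claimed query bound.

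The main obstacle is not in this final assembly: every substantive difficulty — bounding $|\O_v|$, proving uniqueness of the rising path, and verifying that Procedure~\ref{alg:query} never misses $C_q$ — has already been discharged in the preceding lemmas. The only thing to watch is consistency of the $O(n^2)$ additive overheads with the chosen {\tt LCQ} variant, so that neither the time nor the space bound ends up dominated by the wrong term.
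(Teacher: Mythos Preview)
Your proposal is correct and follows exactly the paper's approach: the paper's proof of this theorem is a one-sentence assembly of Lemma~\ref{lem:PointSetBounds}, the line-sweep {\tt LCQ} implementation, and Lemma~\ref{lem:PointSetCorrectness}, and you have reproduced precisely that assembly with the bounds substituted in. Your added remarks about the convex-hull test and the alternative divide-and-conquer {\tt LCQ} are accurate elaborations that the paper leaves implicit.
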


\section{QMER problem}\label{sec:qmer}
The input consists of a set of points $P=\{p_1, p_2, \ldots, p_n\}$ 
in a rectangular region $\cal A$. An axis-parallel rectangle inside
$\cal A$ is said to be an {\it empty rectangle} if it does not contain
any point of $P$. An empty rectangle is called {\it maximal empty
rectangle} (MER) if no other empty rectangle in $\cal A$ properly
contains it. Here our objective is to preprocess $P$ such that given
any arbitrary query point $q \in  \cal A$, the largest area rectangle
$L(q)$ inside $\cal A$ that does not contain any point in $P$ but
contains $q$, can be reported efficiently.  First observe that 
$L(q)$ is an MER. Let $M$ denote the set of all possible MERs in
$\cal A$. In \cite{NHL}, it is shown that  $|M|=\Theta(n^2)$ in the
worst case.

In the preprocessing phase, we partition  $\cal A$ into a set $C$ of
cells, such that for every point $q$ inside a cell $c \in C$, the
largest MER containing $q$ is the same. This is achieved by drawing 
horizontal and a vertical lines through each point in $P$. This
splits $\cal A$ into the set $C$ of $O(n^2)$ cells. Observe that for
any cell $c \in C$ and any MER $M_\alpha\in M$, either $c\cap
M_\alpha=c$ or $c\cap M_\alpha=\phi$. Therefore, if the query point
$q$ lies inside  $c \in C$, we need to report the largest MER
containing $c$. We choose a representative point inside each cell $c
\in C$. Let $Q$ be the set of representative points. We compute all
the MERs using the algorithm in \cite{NHL}, and sort them with respect
to their area. We also construct an augmented dynamic range tree
${\T}$ with the points in $Q$ in $O(n^2\log n)$ time and space
\cite{MN}.  Next, we process the members in $M$ in order. For each
$M_\alpha \in M$, we identify the set $Q_\alpha$ of points in $Q$ that
are inside $M_\alpha$. We store a pointer to $M_\alpha$ along with
each point in $Q_\alpha$ and then delete $Q_\alpha$ from ${\T}$. This
step takes $O(L+\log n)$ time~\cite{MN}, where $L$ is the number of
points inside $M_\alpha$. After processing all the MERs in $M$, we
have stored a pointer to the largest MER along with each $q \in Q$.
Thus, we have the following theorem:

\begin{theorem}
A set of $n$ points in a rectangle $A$ can be preprocessed  in
$O(n^2\log n)$ time and space so that the largest empty rectangle
query containing the query point can be answered in $O(\log n)$ time.
\end{theorem}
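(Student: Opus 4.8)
The plan is to turn the query problem into a table lookup: precompute, for every cell of a grid subdivision of $\cal A$, a pointer to the largest MER containing that cell, and at query time merely locate the cell containing $q$ and follow its pointer. The enabling fact is the \emph{cell invariant}. Drawing a horizontal and a vertical line through each point of $P$ partitions $\cal A$ into a grid of $O(n^2)$ axis-parallel cells. Every side of an MER is either a side of $\cal A$ or is supported by a point of $P$, hence lies on one of these grid lines; so no MER boundary crosses the interior of a cell, and for every cell $c$ and every $M_\alpha\in M$ either $c\subseteq M_\alpha$ or $c\cap M_\alpha=\emptyset$. Consequently $L(q)$ depends only on the cell containing $q$, and it suffices to choose one representative point $q_c$ per cell and store, with $q_c$, the largest MER through $q_c$.

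\textbf{Preprocessing and query.} I would compute $M$ with the algorithm of Namaad et al.~\cite{NHL}; since $|M|=O(n^2)$ this costs $O(\min(n^2,m\log n))=O(n^2)$ time. Then sort the MERs by area ($O(n^2\log n)$), collect the set $Q$ of $O(n^2)$ representatives, and build an augmented dynamic range tree $\T$ on $Q$ in $O(n^2\log n)$ time and space~\cite{MN}. Now scan the MERs in \emph{non-increasing} order of area; for the current $M_\alpha$, query $\T$ for the set $Q_\alpha$ of points \emph{still present} in $\T$ that lie inside $M_\alpha$, attach to each of them a pointer to $M_\alpha$, and delete them from $\T$. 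Because the scan runs from the largest MER to the smallest and a representative leaves $\T$ as soon as it is first covered, the pointer finally stored with $q_c$ is exactly the largest MER containing $q_c$, which by the cell invariant is $L(q)$ for every $q$ in $c$. At query time, two binary searches on the sorted $x$- and $y$-coordinates of $P$ locate the cell of $q$ in $O(\log n)$ time, and its pointer is read in $O(1)$ time.

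\textbf{Accounting.} Building $\T$ dominates the space at $O(n^2\log n)$; sorting the MERs and building $\T$ each take $O(n^2\log n)$ time; query-time point location is $O(\log n)$. The step requiring care is the deletion phase: processing $M_\alpha$ costs $O(|Q_\alpha|+\log n)$ time~\cite{MN}, and since each point of $Q$ is deleted at most once, $\sum_\alpha |Q_\alpha| = O(|Q|) = O(n^2)$, while $|M|=O(n^2)$ makes the $\sum_\alpha \log n$ term $O(n^2\log n)$; hence the whole phase is $O(n^2\log n)$. Summing up yields the claimed $O(n^2\log n)$ preprocessing time and space and $O(\log n)$ query time.

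\textbf{Main obstacle.} The only subtle point is this last piece of bookkeeping: the per-MER work must be output-sensitive in the \emph{remaining} points, so that the $|Q_\alpha|$ terms telescope to $O(n^2)$ in total rather than being $\Theta(n^2)$ per MER (which would give an $n^4$ bound). This is exactly what the ``report-and-delete within an axis-parallel query rectangle'' capability of the augmented range tree of~\cite{MN} buys, and it is also what forces the non-increasing processing order, so that the largest covering MER claims each representative before it is removed. The cell invariant and the $O(\log n)$ grid point location are routine.
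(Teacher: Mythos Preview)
Your proposal is correct and follows essentially the same approach as the paper: the grid partition into $O(n^2)$ cells, the cell invariant, the representatives $Q$, the augmented dynamic range tree of~\cite{MN}, and the sweep over MERs in sorted order with report-and-delete are all exactly what the paper does. You are in fact more explicit than the paper on two points it leaves implicit --- that the MERs must be processed in \emph{non-increasing} area order for the first-claim-wins deletion to yield the largest covering MER, and the telescoping of $\sum_\alpha |Q_\alpha|$ to $O(n^2)$ --- so your write-up would stand on its own.
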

Moreover, it is not hard to see that we can construct examples, where
there are $\Omega(n^2)$ cells, so that the MER containing each of
these cells is combinatorially different, i.e., the boundary of any of
the two MERs are not incident to the same set of points in $P$.  This 
suggests that in order to answer queries in polylogarithmic time, we
need to somehow store $\Omega(n^2)$ cells in a data structure, and
hence it is very unlikely to improve the cost of preprocessing in
order to maintain $O(\log n)$ query time. 

\section{Future Work} 
Our focus in this paper has been in terms of understanding which
problems can be solved within subquadratic preprocessing time, while
maintaining the polylogarithmic query time.   At this stage the
central problem here is to  understand whether the preprocessing time
for the QMEC problem for the point set case can be tightened to a
subquadratic bound.
 A possible lead for improvement is as follows.
If we were to insert $q$ into the
set of points $P$, and compute the Delaunay triangulation of the new
set, then the query circle, $C_q$, is a circumcircle of one of the
triangles $t$ incident to $q$ - in fact the angle subtended at $q$ in 
$t$ will be greater than $\pi/2$.  The running time for inserting $q$
in the
Delaunay triangulation of $P$ is proportional to the degree of $q$. 
One may be tempted to
use a (randomized) incremental algorithm for constructing Delaunay
triangulation, but there are cases in which the degree of $q$ can be
linear. This approach  may lead to a practical and a simpler way to
handle these types of queries, but this remains to be seen.  
Alternatively, one may look into the divide and conquer algorithms for
computing the Voronoi diagram, and  see whether in the ``merge step'',
the maximal empty circles can be maintained. Alternatively, one may try to use the planar separator theorem to partition the Voronoi diagram, recursively, and for the separator vertices (point), build an appropriate structure, so that preprocessing can be performed in subquadratic time, and the queries can be answered in sublinear time.

It will be desirable to improve the preprocessing cost in the case of
simple polygons. A real challenge will be to  match the complexity in
this case to exactly that of the convex polygon case.

While we have studied a few canonical problems, there are several
other variants that are as yet untouched. We can also ask similar
questions on multidimensional geometric sets, but we suspect that the
curse of dimensionality might restrict us to approximations.

\noindent {\bf Acknowledgments}: We are grateful to Samir Datta and
Vijay Natarajan for their helpful suggestions and ideas.
We are also thankful to Subir Ghosh for providing the environment to
carry out this work.

\end{document}